\def\BibTeX{{\rm B\kern-.05em{\sc i\kern-.025em b}\kern-.08emT\kern-.1667em\lower.7ex\hbox{E}\kern-.125emX}}
\newcommand{\BO}[1]{{ O}\left(#1\right)}
\newcommand{\BTO}[1]{\tilde{ O}\left(#1\right)}
\newcommand{\BT}[1]{{\Theta}\left(#1\right)}
\newcommand{\EXP}[1]{\textnormal{E}\left[#1\right]}
\newcommand{\PR}[1]{\textnormal{Pr}\left[#1\right]}
\newcommand{\D}{\mathcal{D}} 
\newcommand{\Sim}{\mathcal{S}} 
\newcommand{\X}{\mathcal{X}} 
\newtheorem{lemma}{Lemma}
\newtheorem{theorem}{Theorem}
\newtheorem{definition}{Definition}
\newcommand{\ip}[2]{\langle{#1},{#2}\rangle}
\newcommand{\p}{\mathbf{p}}
\newcommand{\q}{\mathbf{q}}
\newcommand{\x}{\mathbf{x}}
\newcommand{\y}{\mathbf{y}}
\renewcommand{\a}{\mathbf{a}}
\newcommand{\norm}[1]{\left\lVert #1 \right\rVert}
\begin{document}

\fancyhead{}

\title{Fair Near Neighbor Search:
Independent Range Sampling in High Dimensions}

\author{Martin Aumüller}
\affiliation{
\institution{IT University of Copenhagen, Denmark}
}
\email{maau@itu.dk}

\author{Rasmus Pagh}
\affiliation{\institution{BARC and IT University of Copenhagen, Denmark}}
\email{pagh@itu.dk}

\author{Francesco Silvestri}
\affiliation{\institution{University of Padova, Italy}}
\email{silvestri@dei.unipd.it}

\begin{abstract}
Similarity search is a fundamental algorithmic primitive, widely used in many computer science disciplines.
There are several variants of the similarity search problem, and one of the most relevant is the $r$-near neighbor ($r$-NN) problem: given a radius $r>0$ and a set of points $S$, construct a data structure that, for any given query point $q$, returns a point $p$ within distance at most $r$ from $q$. 
In this paper, we study the $r$-NN problem in the light of fairness.
We consider fairness in the sense of equal opportunity: all points that are within distance $r$ from the query should have the same probability to be returned. In the low-dimensional case, this problem was first studied by Hu, Qiao, and Tao (PODS 2014).
Locality sensitive hashing (LSH), the theoretically strongest approach to similarity search in high dimensions, does not provide such a fairness guarantee. 
To address this, we propose efficient data structures for $r$-NN where all points in $S$ that are near $q$ have the same probability to be selected and returned by the query. 
Specifically, we first propose a black-box approach that, given any LSH scheme, constructs a data structure for uniformly sampling points in the neighborhood of a query.
Then, we develop a data structure for fair similarity search under inner product that requires nearly-linear space and exploits locality sensitive filters. 
The paper concludes with an experimental evaluation that highlights (un)fairness in a recommendation setting on real-world datasets and discusses the inherent unfairness introduced by solving other variants of the problem.
\end{abstract}

\begin{CCSXML}
<ccs2012>
   <concept>
       <concept_id>10003752.10003809.10010055.10010057</concept_id>
       <concept_desc>Theory of computation~Sketching and sampling</concept_desc>
       <concept_significance>500</concept_significance>
       </concept>
   <concept>
       <concept_id>10002951.10003227.10003351.10003445</concept_id>
       <concept_desc>Information systems~Nearest-neighbor search</concept_desc>
       <concept_significance>500</concept_significance>
       </concept>
 </ccs2012>
\end{CCSXML}

\ccsdesc[500]{Theory of computation~Sketching and sampling}
\ccsdesc[500]{Information systems~Nearest-neighbor search}

\keywords{Similarity search; Locality Sensitive Hashing; Fairness; Sampling}

\maketitle

\section{Introduction}

In recent years, there has been an increasing interest in building algorithms that achieve \emph{fairness} under certain technical definitions of fairness~\cite{DworkHPRZ12}. This was for example highlighted by the PODS 2019 invited talk on algorithmic fairness given by Venkatasubramanian~\cite{Venkatasubramanian19}. 
The goal is to remove, or at least minimize, unethical behavior such as discrimination and bias in algorithmic decision making.
There is no unique definition of fairness~(see \cite{HardtPNS16} and references therein), but different formulations that depend on the computational problem and on the ethical goals we aim for.
Fairness goals are often defined in the political context of socio-technical systems~\cite{Whi16}, and have to be seen in an interdisciplinary spectrum covering many fields outside computer science~\cite{Selbst19}.
In machine learning, algorithmic fairness is often considered with respect to a classification process.
The concept of ``equal opportunity'' requires that people who can achieve a certain advantaged outcome, such as finishing a university degree or paying back a loan, have equal opportunity of being able to get access to it in the first place, such as getting into a university program or getting approval of a loan. 

Bias can arise even at a low level within basic data structures that are used by decision-making algorithms as a black-box (see discussion in~\cite{HarPeledM19}). 
It is possible that some techniques for boosting performance, like randomization and approximation that result in non-deterministic behavior, add to the overall algorithmic bias. For instance, some database indexes for fast search might give an (unexpected) advantage to some portions of the input data. 

In this paper, we will propose what fairness could mean in the setting of similarity search. 
Similarity search is an important primitive in many applications in computer science such as  machine learning, recommender systems, and data mining. 
One of the most common formulations of similarity search is the $r$-near neighbor ($r$-NN) problem: for a given radius $r>0$, a distance function $\D(\cdot, \cdot)$ that reflects the (dis)similarity between two data points, and a set $S$ of data points, the $r$-NN problem requires to construct a data structure that, given a query point $\q$, returns a point $\p$ such that $\D(\p,\q)\leq r$, if such a point exists. As we will see, common existing data structures for similarity search have a behavior that introduces bias in the output. Our goal is to capture and algorithmically remove this bias from these data structures.


We consider fairness in the sense of equal opportunity. Our goal is to develop a data structure for the $r$-near neighbor problem where all points within distance $r$ from the given query have the same probability to be returned: if $B_S(\q,r)$ is the ball of input points at distance at most $r$ from a query $\q$, we would like that each point in $B_S(\q,r)$ is returned as near neighbor of $\q$ with probability $1/|B_S(\q,r)|$. 
For all constructions presented in this paper, these guarantees hold only in the absence of a
failure event that happens with probability at most $\delta$ for some small $\delta > 0$.
In other words, we aim at solving the following sampling problem:
\begin{definition}\label{def:nns}
Consider a set $S\subseteq \X$ of $n$ points in a metric space $(\X, \D)$. 
The \emph{$r$-near neighbor sampling problem} ($r$-NNS) asks to construct a data structure for $S$ to solve the following task with probability at least $1 - \delta$: Given query $\q$, return a point $\p$ uniformly sampled from the set $B_S(\q,r)$.
\end{definition}

To see an example application for such a system, consider a recommender system used by a newspaper to recommend articles to users. 
Popular recommender systems based on matrix factorization~\cite{KorenBV09} give recommendations by computing the inner product similarity of a user feature vector with all item feature vectors using some efficient similarity search algorithm. 
It is common practice to recommend those items that have the largest inner product with the user. 
However, in general it is not clear that it is desirable to recommend the ``closest'' articles.
Indeed, it might be desirable to recommend articles that are on the same topic but are not \emph{too} aligned with the user feature vector, and may provide a different perspective~\cite{Abiteboul17}.
Knowing a solution to the $r$-NNS problem allows to make a recommendation slightly further away from the user feature vector, but still within a certain distance threshold, as likely to be returned as the closest feature vectors.

The $r$-NNS problem can also be seen as a special case of query sampling in database systems~\cite{Olken1995}, where the goal is to return a random sample of the output of a given query, for efficiently providing statistics on the query.
This could for example be used in discrimination discovery in existing data bases~\cite{LuongRT11}: by performing independent queries to obtain a sample with statistical significance, we can reason about the distribution of attribute types. We could report on discrimination  if the population counts grouped by a certain attribute differ much more than we would expect them to. 
We will return to the implications in database systems in the related work section.

To the best of our knowledge, previous results focused on exact near neighbor sampling in the Euclidean space up to three dimensions \cite{AfshaniP19,AfshaniW17,HuQT14,Olken1995}.
Although these results might be extended to  $\mathbb{R}^d$ for any $d>1$,  they suffer the \emph{curse of dimensionality} as the  query time increases exponentially with the dimension, making the data structures too expensive in high dimensions.
These bounds are unlikely to be significantly improved since several conditional lower bounds show that an exponential dependency on $d$  in query time or space is unavoidable for \emph{exact} near neighbor search (see e.g., \cite{AlmanR15,Williams05}). 

A common solution to the near neigbor problem in high dimensions is provided by the locality-sensitive hashing (LSH) framework proposed by Indyk and Motwani~\cite{IndykM98}. 
In this framework, which is formally introduced in Section~\ref{sec:prelims}, data points are hashed into buckets and only colliding points are inspected when answering a query. Locality-sensitive hash functions are designed in such a way that the collision probability between two points is a decreasing function of their distance~\cite{Charikar02}. As we will show in Section~\ref{sec:prelims}, the standard LSH approach is not suitable for solving the $r$-NNS problem. 
While the uniformity property required in $r$-NNS can be trivially achieved by finding \emph{all} $r$-near neighbor of a query and then randomly selecting one of them, this is computationally inefficient since the query time is a function of the size of the neighborhood. 
One contribution in this paper is the description of a much more efficient data structure, that still uses LSH in a black-box way.

Observe that the definition above does not require different query results to be independent.
If the query algorithm is deterministic and randomness is only used in the construction of the data structure, the returned near neighbor of a query will always be the same.
Furthermore, the result of a query $\q$ might influence the result of a different query $\q'$. 
This motivates us to extend the $r$-NNS problem to the scenario where we aim at independence.

\begin{definition}\label{def:nnis}
Consider a set $S\subseteq \X$ of $n$ points in a metric space $(\X, \D)$.
The \emph{$r$-near neighbor independent sampling problem} ($r$-NNIS) asks to construct a data structure for~$S$ that for any sequence of up to $n$ queries $\q_1, \q_2,\ldots, \q_n$ satisfies the following properties with probability at least $1 - \delta$:
\begin{enumerate}
\item For each query $\q_i$, it returns a point $\textnormal{OUT}_{i,\q_i}$ uniformly sampled from $B(\q_i,r)$;
\item The point returned for query $\q_i$, with $i>1$, is independent of previous query results. That is, for any $\p\in B(\q_i,r)$ and any sequence $\p_1,\ldots,\p_{i-1}$, we have that 
\begin{equation*}
\Pr[ \textnormal{OUT}_{i,\q} {=} \p \mid \textnormal{OUT}_{i-1,\q_{i - 1}} {=} \p_{i - 1}, \ldots, \textnormal{OUT}_{1,\q_1} {=} \p_1] = \frac{1}{|B_S(\q,r)}.
\end{equation*}
\end{enumerate}
\end{definition}
We note that in the low-dimensional setting~\cite{HuQT14,AfshaniW17,AfshaniP19}, the $r$-near neighbor independent sampling problem is usually called \emph{independent range sampling} (IRS).

\subsection{Results}
We propose several solutions to the $r$-NN(I)S problem under different settings. We hope that our fairness view on independent range sampling gives a new perspective to fair similarity search. On the technical side, the paper contributes the following methods:
\begin{itemize}
    \item Section~\ref{sec:sampling} describes a solution to the $r$-near neighbor sampling problem with running time guarantees matching those of standard LSH up to polylog factors. The data structure uses an independent permutation of the data points and inspects buckets according to the order of points under this permutation.
    \item Section~\ref{sec:independent:sampling} shows how to solve the independent sampling case. The query time still matches that of standard LSH up to poly-logarithmic factors. Each bucket is equipped with a count-sketch and the algorithm works by repeatedly sampling points within a certain window from the permutation. 
    \item In Section~\ref{sec:tableau} we introduce an easy-to-implement nearly-linear space data structure  based on the LSH filter approach put forward in~\cite{andoni2017optimal,christiani2017framework}. 
 As each input point appears once in the data structure,  the data structure can be easily adapted to solve the NNIS problem. 
     \item Lastly, in Section~\ref{sec:evaluation} we present an empirical evaluation of (un)fairness in traditional recommendation systems on real-world datasets, and we then analyze the additional computational cost for guaranteeing fairness.
\end{itemize}

We observe that both data structures in Sections~\ref{sec:independent:sampling} and~\ref{sec:tableau} solve the NNIS problem. 
The first data structure is distance-agnostic since it uses as a black-box a given LSH schema and can be automatically applied to any distance measure by just selecting the proper LSH schema.
On the other hand, the second data structure is simpler, uses only near-linear space, but comes with a slower query time. Additionally, it does not use LSH as a black-box and works only for some distances: 
we describe it for similarity search under inner product, although it can be adapted to some other metrics (like Euclidean and Hamming distances) with standard techniques.

\subsection{Previous work}
\label{sec:previous:work}

\paragraph{Independent Range Sampling} 
The IRS problem is a variant of a more general problem known as \emph{random sampling queries}, introduced in the 1980s for computing statistics from a database: given a database and a 
query (e.g., range, relational operator), return a random sample of the query result set. 
The random sample can be used for estimating aggregate queries (e.g., sum or count) or for query optimization. We refer to the survey~\cite{Olken1995Survey} for a more detailed overview.  
One of the first results on range sampling is by Olken and Rotem~\cite{Olken1995}, who proposed a data structure based on a R-tree for sampling points within a region described by a union of polygons. 
Hu et al. introduced in~\cite{HuQT14} the
IRS problem, where the focus is on extracting \emph{independent} random samples.
The paper presents a data structure for the unweighted, dynamic version in one dimension.
Later, Afshani and Wei~\cite{AfshaniW17} proposed a data structure that solves the weighted case in one dimension, and the unweighted case in three dimensions for half-space queries. 
In a very recent paper, Afshani and Phillips~\cite{AfshaniP19} extended this line of work and provide a lower bound for the worst-case query time with nearly-linear space.

\paragraph{Applications of Independent Range Sampling}
Since near-neighbor search in the context of recommender systems works usually on medium- to high-dimensional data sets, we see IRS as a natural primitive to introduce a concept of fairness into recommender systems.
As mentioned in~\cite{AfshaniP19}, IRS is a useful statistical tool in data analysis and has been used in the database community for many years. Instead of obtaining the set of all $r$-near neighbors, which could be a very costly operation, an analyst (or algorithm) might require only a small sample of ``typical'' data points sampled independently from a range to provide statistical properties of the data set.
Another useful application is diversity maximization in a recommender system context. As described by Adomavicius and Kwon in~\cite{adomavicius2014optimization}, recommendations can be made more diverse by sampling $k$ items from a larger top-$\ell$ list of recommendations at random. Our data structures could replace the final near neighbor search routine employed in such systems.  
Finally, we observe that IRS can have applications even in the context of discrimination discovery. For instance Luong et al.~\cite{LuongRT11} used $k$ nearest neighbor search for detecting  significant differences of treatment among users with similar, legally admissible, characteristics. Our data structure can be used in this context for speed up the procedure by sampling a subset of points.

\paragraph{Concurrent Work.}

Concurrently and independently of a first draft of our work\footnote{Note to the reviewer: We omit a reference to preserve anonymity.}, Har-Peled and Mahabadi~\cite{HarPeledM19} considered two variants of the fairness notion discussed in this paper:
\begin{itemize}
	\item {\bf Exact neighborhood.} This coincides with our fairness notion in Definition~\ref{def:nnis} except that sampling probabilities are allowed to differ by a multiplicative factor of $1+\varepsilon$.
	\item {\bf Approximate neighborhood.} This relaxed notion, which allows a faster algorithm, requires sampling to be uniform over some set $S'$ that includes all $r$-near neighbors and does not include any points that are at distance more than $cr$ from~$q$, for some constant $c>1$.
\end{itemize}
From a technical point of view, the algorithm described in~\cite{HarPeledM19} resembles the algorithm defined in Section~\ref{sec:tableau}. The algorithm described there solves the exact neighboorhood version in the special case of
near-linear space LSH data structures. The work of Har-Peled and Mahabadi generalizes this variant to the super-linear space regime by using approximate counting.
For the exact neighborhood variant with fixed $\varepsilon > 0$, their query time bound is $\tilde{O}\left(n^\rho\,\frac{b_S({\bf q},cr)}{b_S({\bf q},r)}\right)$.
Our time bound for $\varepsilon = 0$ is better: $\tilde{O}\left(n^\rho + \frac{b_S({\bf q},cr)}{b_S({\bf q},r)}\right)$, see Theorem~\ref{thm:nns} for details.

For the approximate neighborhood variant with fixed $\varepsilon > 0$, the algorithm of~\cite{HarPeledM19} has query time $\tilde{O}(n^\rho)$, which is better than our query time if $n^\rho \ll \frac{b_S({\bf q},cr)}{b_S({\bf q},r)}$.  
However, we argue that approximate neighborhood is a relatively weak notion of fairness in that it can lead to very different sampling probabilities for different elements at the same distance from ${\bf q}$.
Our discussion uses notation introduced in Section~\ref{sec:nnLSH}, so readers who are interested in the formal details should consult that section.
The fairness guarantee is weak even for the particular sets $S$ that occur in the algorithm of~\cite{HarPeledM19}, namely sets of the form $S' = S({\bf q},cr) \cap \left(\bigcup_i S_{i,\ell_i({\bf q})}\right)$ where $S_{i,\ell_i({\bf q})}$ is defined in~(\ref{eq:bucket}).
With high probability $S({\bf q},r) \subseteq S'$, so each element at distance at most $r$ is sampled with roughly the same probability.
However, for elements at distance between $r$ and $cr$ the sampling probability depends on their local neighborhoods.
Consider elements $x_1, x_2 \in S$ that have the same distance from ${\bf q}$ and each appear in $S'$ with probability $1/2$, while both appear in $S'$ with probability $1/4$.
Furthermore, suppose that conditioned on $x_2\not\in S'$, the expected size of $S'$ is constant.
Then it is easy to see that $x_1$ is sampled with probability $\Omega(1)$.
Moreover, suppose $x_2$ is part of a tight cluster of $\Omega(n)$ points from~$S$ that all have the same hash value with high probability.
Then conditioned on $x_2\in S'$, the size of $S'$ is $\Omega(n)$ with high probability.
This means that $x_2$ is sampled with probability $O(1/n)$.
It is not hard to confirm that the above scenario is possible for concrete LSH families and data sets. We will provide one such example in the experimental evaluation, described in Section~\ref{sec:approx_fair}.

Experiments in~\cite{HarPeledM19} consider the MNIST data set, but only report on the uniformly of samples within the set $S'$.
Thus, they do not shed light on whether sampling probabilities differ in practice because of correlations in the choice of $S'$.

\subsection{Discussion of Fairness Assumptions}
In the context of our problem definition we assume---as do many papers on fairness-related topics---an implicit world-view described by Friedler et al. in~\cite{FriedlerSV16} as ``what you see is what you get''. WYSIWYG means that a certain distance
between individuals in the so-called ``construct space'' (the true merit of individuals) is approximately represented by the feature vectors in ``observed space''. As described in their paper, one has to subscribe to this world-view to achieve certain fairness conditions. Moreover, we acknowledge that our problem definition requires to set a threshold parameter $r$ which might be internal to the dataset. This problem occurs frequently in the machine learning community, e.g., when score thresholding is applied to obtain a classification result. Kannan et al. discuss the fairness implications of such threshold approaches in~\cite{Kannan}. 

We want to stress that we do not claim that the $r$-near neighbor independent sampling problem is \emph{the} fairness definition in the context of similar search. 
We think of it as a suitable starting point for discussion, and acknowledge that the application will often motivate the desired fairness property. 
For example, in the case of a recommender system, we might want to consider a weighted case where closer points are more likely to be returned. As discussed in the concurrent work section (and exemplified in the experimental evaluation), a standard LSH approach does not have such guarantees albeit its monotonic collision probability function.  
We leave the weighted case as an interesting direction for future work.



\section{Preliminaries}
\label{sec:prelims}
 \subsection{Near neighbor search}\label{sec:nn}
Let $(\X,\D)$ be a high dimensional metric space over the set $\X$ with distance function $\D(\cdot, \cdot):\X \times \X \rightarrow \mathbb{R}_{\geq 0}$. As an example, we can set $\X=\mathbb{R}^d$ and let $\D(\cdot, \cdot)$ denote the $\ell_p$ norm, with $p\in \{1,2,+\infty\}$. 
Given a set $S\subseteq \X$ of $n$ points in the metric space $(\X,\D)$ and a radius $r>0$, the \emph{$r$-Near Neighbor} ($r$-NN) problem is to construct a data structure that, for any given query point $\q\in \X$, returns a point $\p\in S$ such that $\D(\q,\p)\leq r$, if such a point exists.
In search of efficient solutions to the $r$-NN problem, several works have targeted the approximate version, named \emph{$(c,r)$-Approximate Near Neighbor} ($(c,r)$-ANN) where $c>1$ is the approximation factor: for any given query $\q\in \X$, the data structure can return a point $\p$ with distance $\D(\q,\p)\leq c\cdot r$.
We refer to the survey \cite{AndoniIR18} for an overview on techniques for approximate near neighbor search in high dimensions. We say that two points $\p,\q$ are \emph{$r$-near} if $\D(\p,\q)\leq r$, \emph{$(c,r)$-near} if $r < \D(\p,\q)\leq c\cdot r$, and \emph{far} if $\D(\p,\q)> c\cdot r$ (if $r$ is clear in the context, we will use the term \emph{near} instead of $r$-near). 
We use $B_S(\q,r)$ to denote the points in $S$ within the ball of radius $r$ and center $\q$ (i.e. $B_S(\q,r) = \{p \in S\colon \D(\p,\q)\leq r\}$), and let $b_S(\q,r) = |B_S(\q,r)|$. 

For the sake of notational simplicity, we assume that  evaluating a distance $\D(\cdot,\cdot)$ or a hash function takes $\BO{1}$ time, and that each entry in $\X$ can be stored in $\BO{1}$ memory words. If this is not the case and a point in $\X$ requires $\sigma$ words and $\tau$ time for reading a point,  computing $\D(\cdot,\cdot)$ or a hash function (with $\sigma\leq \tau)$, then it suffices to add the additive term $n\cdot \sigma$ to our space bounds (we store a point once in memory and then refer to it with constant size pointers), and multiply construction and query times by a factor $\tau$  (since the time of our algorithms is almost equivalent to the number of distance computations or hash computations). We assume the length of a memory word to be  $\BT{\log n}$ bits, where $n$ is the input size.

{\bf Comment.}
In some settings, the distance between points is described in terms of similarity rather than distance.
The metric space $\X$ features a similarity measure $\Sim(\cdot,\cdot):\X \times \X \rightarrow \mathbb{R}_{\geq 0}$. 
The $r$-NN  problem under the similarity measure $\Sim$ requires to construct a data structure that, for any given query point $\q\in \X$, returns a point $\p\in S$ such that $\Sim(\q,\p)\geq r$. The $(c,r)$-ANN version is defined equivalently, with $c\in (0,1)$. 
Some examples used in this paper are the inner product  where $\X=\mathbb{R}^d$ and  $\Sim(\cdot, \cdot)$ denotes the inner product, and the Jaccard similarity where $\X$ is the family of all subsets of $\{1, \ldots, d\}$ where $\Sim(\q,\p) = |\q\cap \p| / |\q\cup \p|$ .

 \subsection{Locality Sensitive Hashing}\label{sec:nnLSH}
Locality Sensitive Hashing (LSH) is a common tool for solving the ANN problem and was introduced in \cite{IndykM98}. 
\begin{definition}
A distribution $\mathcal H$ over maps $h\colon \X \rightarrow U$, for a suitable set $U$, is called $(r, c\cdot r, p_1, p_2)$-sensitive if the following holds for any $\x$, $\y\in  \X$:
\begin{itemize}
\item  if $\D(\x, \y) \leq r$, then $\Pr_h[h(\x) = h(\y)] \geq p_1$;
\item  if $\D(\x, \y) > c\cdot r$, then $\Pr_h[h(\x) = h(\y)] \leq p_2$.
\end{itemize}
The distribution $\mathcal H$ is called an LSH family, and has quality $\rho = \rho(\mathcal{H})  = \frac{\log p_1}{\log p_2}$.
\end{definition}
For the sake of simplicity, we assume that $p_2\leq 1/n$: if $p_2>1/n$, then it suffices to create a new LSH family $\mathcal H_K$ obtained by concatenating $K=\BT{\log_{p_2}(1/n)}$ i.i.d. hashing functions from $\mathcal H$. The new family $\mathcal H_K$ is $(r, cr, p_1^K, p_2^K)$-sensitive and $\rho$ does not change.

The standard approach to $(c,r)$-ANN using LSH functions is the following. Let $\ell_1,\ldots, \ell_L$ be $L$ functions randomly and uniformly  selected from $\mathcal H$.
The data structure consists of $L$ hash tables $H_1,\ldots H_L$: each hash table $H_i$ contains the input set $S$  and uses the hash function $\ell_i$ to split the point set into buckets.
For each query $\q$, we iterate over the $L$ hash tables: for any hash function, compute $\ell_i(\q)$ and compute, using $H_i$, the set 
\begin{equation}\label{eq:bucket}
S_{i,\ell_i(\q)}=\{\p: \p\in S, \ell_i(\p)=\ell_i(\q)\}
\end{equation}
of points in $S$ with the same hash value; then, compute the distance $\D(\q,\p)$ for each point $\p\in S_{i,\ell_i(\q)}$.
The procedure stops as soon as a $(c,r)$-near point is found. It stops and returns $\perp$  if there are no remaining points to check or if it found more than $3L$ far points~\cite{IndykM98}.
By setting $L=\BT{p_1^{-1} \log n}$, the above data structure returns, with probability at least $1-1/n$, a $(c,r)$-near point of $\q$ in time $\BO{L} = \BO{n^\rho \log n}$ and space $\BO{L n}=\BO{n^{1+\rho}\log n}$, assuming that the hash function can be computed in $\BO{1}$ time and each hash value requires $\BO{1}$ space.
By avoiding stopping after $3L$ far points have been found, the above data structure can be adapted to return a $r$-near point of $\q$ in expected time $\BO{n^\rho \log n + b_S(\q, cr)}$.

Standard LSH families do not satisfy the fairness definitions introduced above.
Consider the simple case with the data set $S=\{\x,\y\}$ and where $\D(\x,\y)=r$ and query $\q=\x$: with a standard LSH approach, we have that $\q$ collides with $\x$ with probability 1, while $\q$ collides with $\y$ only once in expectation; therefore, it is likely that the first near point of $\q$ found by the data structure is $\x$. This is true even if the order in which the $L=\BTO{p_1^{-1}}$ hash tables are visited is randomized.
However, if performance is not sought, the standard LSH method can be easily adapted to randomly return a point in $B_S(q,r)$: it suffices to find all near neighbors of $\q$ and to randomly select one of them. However, this approach could require expected time $\BTO{b_S(\q, r) n^\rho +b_S(\q,cr)}$.

\subsection{Sketch for distinct elements}\label{sec:sketch}
In Section \ref{sec:independent:sampling} we will use sketches for estimating the number of distinct elements.
Consider a stream of $n$ elements $x_1,\ldots x_m$ in the domain $[n] = \{1,\ldots, n\}$ and let  $F_0$ be the number of distinct elements in the stream (i.e., the zeroth-frequency moment).
Several papers have studied sketches (i.e., compact data structures) for estimating $F_0$.
For the sake of simplicity we use the simple sketch in \cite{BarYossefJKST02}, which generalizes the seminal result by Flajolet and Martin \cite{FlajoletM85}.
The data structure consists of $\Delta=\BT{\log(1/\delta)}$ lists $L_1,\ldots L_\Delta$; for $1 \leq  w \leq \Delta$, $L_w$ contains the $t = \BT{1/\epsilon^2}$  distinct smallest values of the set $\{\psi_w(x_i) : 1\leq i \leq m\}$, where $\psi_w : [n] \rightarrow [n^3]$ is a hash function picked from a pairwise independent family. 
It is shown in \cite{BarYossefJKST02} that the median  $\hat F_0$  of
the values $tn^3/v_0, \ldots, tn^3/v_\Delta$, where $v_w$ denotes the $t$-th smallest value in $L_w$, is an $\epsilon$-approximation to the number of distinct elements in the stream with probability at least
$1-\delta$: that is,  $(1-\epsilon) F_0 \leq \hat F_0 \leq (1+\epsilon) F_0$.
The data structure requires $\BO{\epsilon^{-2} \log m \log (1/\delta)}$ bits and $\BO{\log(1/\epsilon) \log m \log (1/\delta)}$ query time. 
A nice property of this sketch is that if we split the stream in $k$ segments and we compute the sketch of each segment, then it is possible to reconstruct the sketch of the entire stream 
by combining the sketches of individual segments (the cost is linear in the sketch size).

\section[r-near neighbor sampling]{$r$-near neighbor sampling}
\label{sec:sampling}
We  start with a simple data structure for the $r$-near neighbor sampling problem in high dimensions that leverages on LSH: with high probability, for each given query $\q$, the data structure returns a point uniformly sampled in $B_S(\q,r)$.
We will then see that, with a small change in the query procedure, the data structure supports independent sampling when the same query point is repeated.

The main idea is quite simple. 
We initially assign a (random) rank to each point in $S$ using a random permutation, and then construct the standard LSH data structure for solving  the $r$-NN problem on $S$.
For a given query $\q$ and assuming that all points in $B_S(\q,r)$ collide with $\q$ at least once, the data structure returns the point in $B_S(\q,r)$ with the lowest rank in the random permutation.
The random permutation, which is independent of the collision probability, guarantees that all points in $B_S(\q,r)$ have the same probability to be returned as near point.

\paragraph*{Construction.} Let $\mathcal{H}$ be an $(r,c\cdot r,p_1,p_2)$-sensitive LSH family with $p_2 = \BO{1/n}$ (see Section \ref{sec:nnLSH}). 
Let $\ell_1,\ldots,\ell_{L}$ be $L=\BT{p_1^{-1} \log n}$  hash functions selected independently and uniformly at random from $\mathcal H$.
The $n$ points in $S$ are randomly permuted (possibly, with an universal hash family over maps $\X\rightarrow [n^{3}]$) and let $r(\p)$ denote the rank of point $\p\in S$ after the permutation.
For each $\ell_i$, we partition the input points $S$ according to the hash values of $\ell_i$ and denote with $S_{i,j}=\{\p: \p\in S, \ell_i(\p)=j\}$  the set of points with hash value $j$ under $\ell_i$.
We store points in each $S_{i,j}$ sorted by increasing ranks.
 
\paragraph*{Query.} Let $\q$ be the query point. 
The query procedure extracts from the set $S_\q = \bigcup_{i=1}^{L} S_{i,\ell_i(\q)}$ (i.e., points colliding with $\q$ under the $L$ hash functions) the $r$-near point of $q$ with lowest rank. This is done as follows.
Initialize $r_\text{min} = +\infty$ and $\x_\text{min}=\perp$, where $\perp$ denotes a special symbol meaning "no near neighbor".
For each $i$ in $\{1,\ldots L\}$, scan $S_{i,j}$ 
until an $r$-near point $\x_i$ is found or the end of the array is reached:
in the first case and if $r_\text{min} > r(\x_i)$,  we set $r_\text{min} = r(\x_i)$ and $\x_\text{min}=\x_i$ (we note that $\x_i$ is the $r$-near point in $S_{i,j}$ with lowest rank).
After scanning the $L$ buckets, we return  $\x_\text{min}$.

\begin{theorem}\label{thm:nns}
With high probability $1-1/n$, the above data structure solves the $r$-NNS problem: given a query $\q$, each point $\p\in  B_S(\q,r)$ is returned as near neighbor of $\q$ with probability $1/b_S(\q,r)$.
The data structure requires $\BT{n^{1+\rho}\log n}$ words, $\BT{n^{1+\rho}\log n}$ construction time, and the expected query time is 
\begin{equation*}
\BO{\left(n^{\rho}+\frac{b_S(\q,cr)}{b_S(\q,r) + 1}\right)\log n}.
\end{equation*}
\end{theorem}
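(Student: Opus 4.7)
The plan is to verify four claims in turn: the fairness guarantee, the space bound, the construction-time bound, and the expected query-time bound. The first three are straightforward consequences of standard LSH bookkeeping together with the independence of the random permutation; the $(c,r)$-near contribution to the query time is the only step that requires real work.

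\textbf{Fairness and structural bounds.} For any fixed $\p \in B_S(\q, r)$, the probability that $\p$ fails to collide with $\q$ under a single LSH function is at most $1 - p_1$, so by independence of the $L = \Theta(p_1^{-1}\log n)$ functions the probability that it does not collide under any of them is at most $(1-p_1)^L \leq e^{-p_1 L} \leq 1/n^2$ with a sufficiently large constant in $L$. A union bound over the at most $n$ near points gives that every near point collides with $\q$ at least once with probability at least $1 - 1/n$. On this good event the algorithm returns exactly the near point of minimum rank, and since the permutation is independent of the hash functions, by symmetry this minimum is uniform on $B_S(\q, r)$. For the structural bounds, each of the $L$ hash tables stores all $n$ input points partitioned into buckets, giving $\BO{Ln} = \BO{n^{1+\rho}\log n}$ words; generating the permutation and then inserting points into buckets in rank order produces pre-sorted buckets in matching $\BO{Ln}$ construction time.

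\textbf{Query time.} Classify each point scanned by the algorithm as near, $(c,r)$-near, or far, and bound the three contributions separately. Since the algorithm stops in each bucket as soon as a near point is found, at most one near point is scanned per bucket, contributing $\BO{L} = \BO{n^\rho \log n}$ in total. For far points, each lies in a fixed bucket with probability at most $p_2 \leq 1/n$, so by linearity the expected number of far-point scans summed over the $L$ buckets is at most $L \cdot n \cdot p_2 \leq L$, again $\BO{n^\rho \log n}$.

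\textbf{The $(c,r)$-near contribution.} This is the main obstacle. Fix a $(c,r)$-near point $\p$ and a bucket $i$; $\p$ is scanned in that bucket iff $\p$ collides with $\q$ under $\ell_i$ and \emph{no} near point of rank smaller than $r(\p)$ also collides with $\q$ under $\ell_i$. Condition on the position of $\p$ among the $b_S(\q, r)$ near points in the random permutation: by symmetry this is uniform on $\{0, 1, \ldots, b_S(\q, r)\}$, so the number $K$ of near points of rank below $\p$ is uniform on that set. Treating the collision indicators under a single LSH function as independent (the standard LSH assumption), the per-bucket scan probability of $\p$ is at most $p_1\,(1-p_1)^{K}$, and averaging over $K$ gives the geometric-series identity
\[
\mathbb{E}_{K}\!\left[(1-p_1)^{K}\right] \;=\; \frac{1 - (1-p_1)^{b_S(\q, r)+1}}{(b_S(\q, r)+1)\,p_1}.
\]
Multiplying by the $L$ buckets and summing over the $b_S(\q, cr)$ $(c,r)$-near points, and then adding the near and far contributions from the previous paragraph, yields the claimed expected time $\BO{(n^\rho + b_S(\q, cr)/b_S(\q, r))\log n}$. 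The hard part is controlling this last term without introducing an extra $n^\rho$ factor: the key idea is exactly that the randomness of the permutation (not just of the hash functions) yields the $1/(b_S(\q,r)+1)$ amortization above.
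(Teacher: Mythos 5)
Your treatment of fairness, space, and construction time is correct and essentially identical to the paper's: collision of every near point under at least one of the $L$ functions holds with probability $1-1/n$, and on that event the output is the rank-minimum of $B_S(\q,r)$, which is uniform because the permutation is independent of the hashing. The near-point and far-point contributions to the query time are also handled as in the paper.

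The $(c,r)$-near analysis is where you depart from the paper, and it does not go through. You correctly identify the actual scan event (``$\p$ collides under $\ell_i$ and no near point of smaller rank collides under $\ell_i$''), which is more honest than the paper's accounting, but you then need two things you do not have. First, the independence of the collision indicators $\{\ell_i(\y)=\ell_i(\q)\}$ across different data points $\y$ under a \emph{single} hash function is not part of the LSH definition and is false for standard families (e.g.\ if several near points coincide, they collide with $\q$ all together or not at all), so $p_1(1-p_1)^{K}$ is not a valid upper bound on the per-bucket scan probability. Second, and more importantly, even granting independence the arithmetic does not produce the theorem: your per-point, per-bucket bound is
\begin{equation*}
p_1\cdot \EXP{(1-p_1)^{K}} \;=\; \frac{1-(1-p_1)^{b_S(\q,r)+1}}{b_S(\q,r)+1},
\end{equation*}
whereas the claimed running time requires $\BO{p_1/(b_S(\q,r)+1)}$ per point per bucket so that the $L=\BT{p_1^{-1}\log n}$ factor cancels. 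The two differ by $\bigl(1-(1-p_1)^{b_S(\q,r)+1}\bigr)/p_1$, which can be as large as $\min\{b_S(\q,r)+1,\,p_1^{-1}\}$; e.g.\ with $b_S(\q,r)=n^{\rho/2}$ and $b_S(\q,cr)=n^{2\rho}$ your bound is $n^{2\rho}\log n$ against the claimed $n^{1.5\rho}\log n$. So the sentence ``yields the claimed expected time'' is exactly the step that is missing. For comparison, the paper charges a non-near point $\p$ in bucket $i$ only for the event that $\p$ collides \emph{and} $r(\p)$ is smaller than the rank of \emph{every} point of $B_S(\q,r)$ (globally, not just those landing in bucket $i$); the permutation alone makes that second event have probability $1/(b_S(\q,r)+1)$ independently of the hashing, which immediately gives $\PR{\ell_i(\p)=\ell_i(\q)}/(b_S(\q,r)+1)$ per point per bucket and hence the stated bound. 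Note that this event is a strict sub-event of the scan event you identified (a near point of smaller rank may simply fail to land in bucket $i$), so reconciling the paper's accounting with the actual behaviour of the scan is itself a nontrivial point --- but as written your argument neither justifies its independence step nor recovers the claimed bound from the identity you display.
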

\begin{proof}
Since $L = \BT{p_1^{-1} \log n}$, all points in $B_S(\q,r)$ collide with $\q$ at least once (i.e., $B_S(\q,r) \subseteq \bigcup_{i=1}^{L} S_{i,\ell_i(\q)}$) with probability at least $1-1/n$. 
The initial random permutation guarantees that each point in $B_S(\q,r)$ has probability $1/b_S(\q,r)$ to have the smallest rank in $B_S(\q,r)$, and hence to be returned as output.

The space complexity and construction time follow since the algorithm builds and stores $L$ tables with $n$ references to points in $S$.
To upper bound the expected query time, we introduce the random variable $X_{\p,i}$ for each point $\p$ with $\D(\p,\q)>r$: $X_{\p,i}$ takes  value 1 if $\p$ collides with $\q$ under $\ell_i$ and if it has a rank smaller than all points in $B_S(\q,r)$; $X_{\p, i}$ is 0 otherwise. 
Points $\p$ and $\q$ collide with probability $\PR{\ell_i(\q)=\ell_i(\p)}$ and $\p$ has rank smaller than  points in $B_S(\q,r)$ with probability $1/(b_S(\q,r) + 1)$; thus we have that $\EXP{X_{\p,i}} = \PR{\ell_i(\q)=\ell_i(\p)}/(b_S(\q,r) + 1)$ (note that the random bits used for LSH construction and for the initial permutation are independent).
Let $\mu$ be the number of points inspected by the query algorithm over all repetitions. By linearity of expectation, we get:
\begin{align*}
\EXP{\mu} &\leq L + \EXP{\sum_{\p\in S\setminus B_S(\q,r)} \sum_{ i=1}^{L} X_{\p,i}} \\
& \leq   L + L \cdot \sum_{\p\in S\setminus B_S(\q,r)}  \frac{\PR{\ell_1(\q)=\ell_1(\p)}}{b_S(\q,r) + 1}\\
 &\leq  L + L \frac{b_S(\q,cr) p_1 + n p_2}{b_S(\q,r) + 1} = \BO{\left(n^\rho + \frac{b_S(\q,cr)}{b_S(\q,r) + 1}\right)\log n},
\end{align*}
since $p_2 = \BO{1/n}$.
\end{proof}

We observe that our data structure for $r$-NNS automatically gives a data structure for $r$-NN that improves the standard  LSH approach:  the standard approach incurs a $\BTO{n^\rho+b_S(\q,cr)}$  expected query time for worst-case data sets, while our data structure is never worse than $\BTO{n^\rho+b_S(\q,cr)/(b_S(\q,r) + 1)}$ in expectation. 
This is consequence of the  initial random permutation that breaks long chains of consecutive $(c,r)$-near points. We remark that all of our methods have an additional running time term that scales with $b(\q,cr)/b(\q,r)$. This makes running time data-dependent in the following sense: If $c$ is increased (by the user), the $\rho$-value of the LSH will decrease (decreasing the running time), but we pay for it with a possible increase in $b(\q, cr)/b(\q,r)$. However, we will see in Section~\ref{sec:costRatio} that this ratio is small in some real-world datasets.

\subsection[Sampling points with/without replacement]{Sampling points with/without replacement}
The data structure for $r$-NNS can be easily adapted to return $k$ points uniformly sampled \emph{without} replacement from $B_S(\q,r)$: by assuming for well-definedness that $k\leq b_S(\q, r)$, it suffices to return the $k$ points at distance at most $r$ in $S_\q$ with the smallest ranks. 
On the other hand, a set of $k$ points uniformly sampled \emph{with} replacement from $B_S(\q,r)$ can be obtained by slightly changing the above data structure for NNS: we repeat $k$ times the query $\q$, but after each query we perform a random perturbation of ranks. Details of this approach are presented in Appendix~\ref{app:nns_repeated} in the supplemental material. The data structure will sample independently a point in $B_S(\q,r)$ at every repetition of the \emph{same} query $\q$.
Unfortunately, performing a random perturbation of ranks in the NNS data structure does not guarantee independence among different query points, and thus does not solve the NNIS problem. In the following section we extend the data structure in such a way that it guarantees independence.

\section[r-near neighbor independent sampling]{$r$-near neighbor independent sampling}
\label{sec:independent:sampling}

In this section, we present a data structure that solves the $r$-NNIS problem with high probability. 
Let $S$ be the input set of $n$ points and let $\Lambda$ be the sequence of the $n$ input points after a random permutation; the rank of a point in $S$ is its position in $\Lambda$.
We first highlight the main idea of the query procedure, then we describe its technical limitations and how to solve them.

Let $k\geq 1$ be a suitable value that depends on the query point $\q$, and  assume that $\Lambda$ is split into $k$ segments $\Lambda_i$, with $i\in\{0,\ldots, k-1\}$. (We assume for simplicity that $n$ and $k$ are powers of two.) 
Each  segment $\Lambda_i$ contains the $n/k$ points in $\Lambda$ with  rank in $[i\cdot n/ k, (i+1)\cdot n/ k)$ 
We denote with $\lambda_{\q,i}$  the number of near neighbors of $\q$ in $\Lambda_i$, and with $\lambda \geq \max_i\{\lambda_{\q,i}\}$ an upper bound on the number of near neighbors of $\q$ in each segment.
By the initial random permutation, we have that each segment contains at most $\lambda=\BT{(b_S(\q,r)/k)\log n}$ near neighbors with probability at least $1-1/n^2$.
The query algorithm works in three steps: 
A) Select uniformly at random an integer $h$ in $\{0,\ldots, k-1\}$ (i.e., select a segment $\Lambda_h$);
B) With probability $\lambda_{\q,h}/\lambda$ move to step C, otherwise repeat step A;
C) Return a point uniformly sampled among the near neighbors of $q$ in $\Lambda_h$. All random choices are independent.

The above procedure guarantees that the returned near neighbor point is uniformly sampled in $B_S(\q,r)$. 
Indeed, a point $\p\in B_S(\q,r)$ in segment $\Lambda_h$ is sampled and returned at step C with probability 
$\PR{OUT=\p} = \sum_{j=1}^{+\infty} p_j$, where 
$p_j$ is the probability of returning $\p$ 
at the $j$-th iteration (i.e., after repeating $j$ times step A).
We have:
\begin{equation*}
p_j = \left(\sum_{i=0}^{k-1} \frac{1-\lambda_{\q,i}/\lambda}{k}\right)^{j-1} \frac{\lambda_{\q,h}}{k \lambda} \frac{1}{\lambda_{\q,h}} = \left(1-\frac{b_S(\q,r)}{k\lambda}\right)^{j-1} \frac{1}{k \lambda}.
\end{equation*}
The term with exponent $j-1$ is the probability that step A is repeated $j-1$ times; the second term is the probability of selecting segment $\Lambda_h$ during the $j$-th iteration and then to move to step C; the third term is the probability of returning point $\p$ in step C.
Then:
\begin{equation*}
\PR{OUT=\p} = \sum_{j=1}^{+\infty} p_j = \sum_{j=1}^{+\infty}  \left(1-\frac{b_S(\q,r)}{k\lambda}\right)^{j-1} \frac{1}{k \lambda} = 
\frac{1}{b_S(\q,r)}.
\end{equation*}
As all random  choices are taken independently at query time, the solution guarantees the independence among output points required by Definition~\ref{def:nnis}.

The above  approach however cannot be efficiently implemented.
The first problem is that each segment might contain a large number of points with distance larger than $r$: these points are not used by the query algorithm, but still affect the running time as the entire segment must be read to find all near neighbors.
We solve this problem by filtering out far points with LSH: at query time, we use LSH to retrieve only the near neighbors (and a small number of $(c,r)$-near and far points) that are in the selected segment.
A second issue is due to segment size: to improve performance, segments should be small and contain at least one near neighbor of $\q$; thus, the number  $k$ of segments should be set to $b_S(\q,r)$. 
However, $b_S(\q,r)$ is not known at query time.
Hence, we initially set $k=2\hat{s_{\q}}$, where $\hat{s_{\q}}$ is a $1/2$-approximation of the number $s_{\q}$ of points colliding with $\q$; such an estimate can be computed with the count distinct sketch from Section~\ref{sec:prelims}.
As $\hat{s_{\q}}$ can be much larger than $b_S(\q,r)$, we expect to sample several segments without near neighbors: thus, for every $\Sigma=\BT{\log^2 n}$ sampled segments with no near neighbors of $\q$, we repeat the procedure with a smaller value of $k$ (i.e., we set $k=k/2$).
We are now ready to fully describe our data structure. 

\paragraph*{Construction}
Let $\ell_1,\ldots,\ell_{L}$ be $L=\BT{p_1^{-1} \log n}$  hash functions selected independently and uniformly at random from an $(r,cr,p_1,p_2)$-sensitive LSH family $\mathcal H$, with $p_2=\BO{1/n}$. Let $r(\p)$ be the rank of point $\p\in S$ after a random permutation of the $n$ data points. 
For each $\ell_i$, we partition the input points $S$ according to the hash values of $\ell_i$ and denote with $S_{i,j}=\{\p: \p\in S, \ell_i(\p)=j\}$  the set of points with hash value $j$ under $\ell_i$.
We associate to each non-empty bucket $S_{i,j}$: 1) an index (e.g., a balance binary tree) for efficiently retrieving all points in $S_{i,j}$ with ranks in a given range; 2) a count distinct sketch of $S_{i, j}$ (see Section~\ref{sec:sketch}) with $\epsilon=1/2$ and $\delta= 1/(6n^2)$. (For buckets containing less that $\BT{\log n}$ points of $S$, we do not store a count distinct sketch since it requires more space than the points; we generate the sketch from $S_{i,j}$ every time it is required.)
We observe that any segment $\Lambda_i$ can be constructed by collecting all points in every $S_{{i'},j}$ with ranks in $[i\cdot n/ k, (i+1)\cdot n/k)$.

\paragraph*{Query}
Consider a query $\q$ and let $S_\q = \bigcup_{i=1}^{L} S_{i,\ell_i(\q)}$ and $S_{\q,r} = S_\q \cap B(\q, r)$. 
A $1/2$-approximation $\hat{s_\q}$ of $s_\q=|S_\q|$ follows by merging the count distinct sketches associated with buckets $S_{i,\ell_i(\q)}$ for each $i\in \{1,\ldots,L\}$.
We note that $s_{\q,r}=|S_{\q,r}|$ cannot be estimated with sketches since they cannot distinguish near and far points.
 The query algorithm is the following (for simplicity, we assume $n$ to be a power of two):
\begin{enumerate}
    \item Merge all count distinct sketches of buckets $S_{i, \ell_i(\q)}$, for each $i\in \{1,\ldots,L\}$, and compute a $1/2$-approximation $\hat{s_\q}$ of $s_\q$, such that  $s_\q/2\leq \hat{s_\q} \leq 1.5 s_\q$.
	\item Set $k$ to the smallest power of two larger than or equal to $2  \hat{s_\q}$; let $\lambda = \BT{\log n}$, $\sigma_{\textnormal{fail}} = 0$ and $\Sigma=\BT{\log^2 n}$.
\item Repeat the following steps until successful or $k<2$:
\begin{enumerate}
    \item Assume the input sequence $\Lambda$ to be split into $k$ segments $\Lambda_i$ of size $n/k$, where $\Lambda_i$ contains the points in $S$ with ranks in $[i \cdot n/k, (i+1)\cdot n/k)$. 
	\item Randomly select an index $h$ uniformly at random from $\{0,\ldots, k-1\}$. Using the index in each bucket, retrieve the set $\Lambda'_{\q,h} = S_{\q,r} \cap \Lambda_h$, that is all points with rank in $[h \cdot n/k, (h+1)\cdot n/k)$ and with distance at most $r$ from $\q$. 
	Set $\lambda_{\q,h} = |\Lambda'_{\q,h}|$.
    \item Increment $\sigma_{\textnormal{fail}}$.  If $\sigma_{\textnormal{fail}}=\Sigma$, then set $k = k/2$  and $\sigma_{\textnormal{fail}}=0$.
    \item With probability $\lambda_{\q,h}/\lambda$, declare success.
\end{enumerate}
\item If the previous loop ended with success, return a near neighbor of $\q$ sampled uniformly at random in $\Lambda'_{\q,h}$; otherwise return $\perp$.
\end{enumerate}

\begin{theorem}
With probability at least $1-1/n$, the above data structure solves the $r$-near neighbor independent sampling problem.
The data structure requires $\BT{ n^{1+\rho}\log n}$ words, $\BO{  n^{1+\rho}\log^2 n}$ construction time, and the expected query time is:
\begin{equation*}
\BO{\left(n^\rho  +\frac{b_S(\q,c\cdot r)}{b_S(\q,r) + 1}\right)\log^5 n}.
\end{equation*} 
\end{theorem}
\begin{proof}
Let $\q$ be a query and assume that $b_S(\q,r)\geq 1$.
We start by bounding the initial failure probability of the data structure.  
By a union bound, we have that the following three events hold simultaneously with probability at least $1-1/(2n^2)$: 
\begin{enumerate}
\item All near neighbors in $B_S(\q,r)$ collide with $\q$ under at least one LSH function. By suitably setting  the constant in $L=\BT{p_1^{-1} \log n}$, the claim holds with probability at least $1-1/(6n^2)$.
\item Count distinct sketches provide a $1/2$-approximation of $s_\q$. By setting $\delta=1/(6n^2)$ in the count distinct sketch construction (see Section~\ref{sec:sketch}), the approximation guarantee holds with probability at least $1-1/(6n^2)$.
\item When $k=2^{\lceil\log s_{\q,r}\rceil}$,  every segment of size $n/k$ contains no more than $\lambda=\BT{\log n}$ points from $S_{\q,r}$.  As points are initially randomly permuted, the claim holds with probability at least $1-1/(6n^2)$ by suitably setting the constant in $\lambda=\BT{\log n}$. 
\end{enumerate}

From now on assume that these events are true. We proceed to show that, with probability at least $1 - 1/(2n^2)$, the algorithm returns a point uniformly sampled in $B_S(\q,r)$. In other words, for $\p \in B_S(\q, r)$, we show that $\PR{OUT=\p} = 1/b_S(\q,r)$. 

Let us first discuss the additional failure event. There exists a non-zero probability that the algorithm returns $\perp$ even when $b_S(\q,r)\geq 1$: the probability of this event is upper bounded by the probability $p'$ that a near neighbor is  not returned in the $\Sigma$ iteration where $k=2^{\lceil\log s_{\q,r}\rceil}$
(the actual probability is even lower, since a near neighbor can be returned in an iteration where $k>2^{\lceil\log s_{\q,r}\rceil}$).
By suitably setting constants in $\lambda =  \BT{\log n}$ and $\Sigma=\BT{\log^2 n}$, we get:
\begin{equation*}
p' = \left( 1 - \frac{b_S(\q,r)}{k\lambda}\right)^{\Sigma} \leq
e^{-\Sigma b_S(\q,r) / (k\lambda)} \leq e^{\BT{-\Sigma/\log n}} \leq \frac{1}{2n^2}.
\end{equation*}

This shows that with probability at least $1-1/n^2$, the initial three events hold and the algorithm returns a near neighbor.
As each point in $B_S(\q,r)$ has the same probability $1/(k\lambda)$ to be returned during an iteration of step 3, we have that all points in $B_S(\q,r)$ are equally likely to be sampled. 
By applying a union bound, we have that the claim also holds with probability at least $1-1/n$ for any sequence of $n$ queries. Moreover, as soon as the aforementioned events \textsf{1},\textsf{2} and \textsf{3}  hold, the output probabilities are not affected by the random choices used for generating the initial permutation, LSH construction and count distinct sketches. 
Our data structure then solves the $r$-NNIS problem.

We now focus on the space and time complexity of the data structure.
For each bucket $S_{i,j}$, we use $\BO{|S_{i,j}|}$ memory words for storing the index and the count distinct sketch. Therefore, the space complexity of our data structure is dominated by the $L$ LSH tables, that is $\BT{ n^{1+\rho}\log n}$ memory words.
The construction time is $\BT{ n^{1+\rho}\log^2 n}$, where the additional multiplicative $\log n$ factor is due to sketch construction.

The expected query time can be upper bounded by assuming that the algorithm returns a point in $B_S(\q,r)$ only when $k$ is equal to the smallest power of two larger than $b_S(\q,r)$ (i.e., $k=2^{\lceil\log s_{\q,r}\rceil}$).
The cost of each iteration of step 3 is dominated by step 3.b, where the set $\Lambda'_{\q,h}$ is constructed: 
the set is obtained by first extracting all points with rank in $[hn/k, (h+1)n/k)$ from  buckets $S_{i,\ell_i(\q)}$, for each $i\in\{1,\ldots, L\}$, and then by discarding those points with distance larger than $r$ from $\q$.
For each bucket, we expect to find  $b_S(\q,r)/k=\BO{1}$ near neighbors of $\q$,  $b_S(\q,c\cdot r)p_1/k = \BO{b_S(\q,c\cdot r) p_1 /b_S(\q, r)}$ $(c,r)$-near neighbors, and not more than $n p_2 = \BO{1}$ far points.
Since there are $L=\BT{p_1^{-1}\log n}$ buckets and the rank query for reporting points within a given rank costs $\BO{\log n + o}$ in each bucket (where $o$ is the output size), the cost of each iteration of step 3 is $\BO{\left(n^\rho+b_S(\q,c\cdot r)/b_S(\q,r)\right)\log^2 n}$. This bound extends to all $k > 2^{\lceil \log s_{\q, r}\rceil}$, since the denominator in the calculations is only larger in this case.
When $k\sim b_S(\q,r)$, we expect $\BO{\log n}$ iterations of step 3 before  returning a near neighbor of $\q$ (recall that step 3.d is successful with probability $\BO{1/\log n}$).  

Finally, we bound the cost of all iterations where $k>2^{\lceil\log s_{\q,r}\rceil}$. Since $\hat{s_{\q}} \leq 2n$, we observe that the $k$ value is adapted $O(\log n)$ times in step 3.d. For each fixed value 
of $k$, $\BO{\log^2 n}$ iterations of step 3 are carried out. These factors yield the claimed bound on the expected running time. The case $b_S(\q, r) = 0$ is similar to $b_S(\q, r) = 1$ with the algorithm returning $\perp$ after the last iteration.
\end{proof}

\section{Independent sampling in nearly-linear space}
\label{sec:tableau}

In this section we study another approach to obtain a data structure for 
the $r$-NNIS problem. The method uses nearly-linear space by storing each data point exactly once in each of $\Theta(\log n)$ independent data structures.
The presented approach is simpler than the solution found in the previous section since it avoids any additional data structure on top of the basic filtering approach described in~\cite{christiani2017framework}.
It can be seen as a particular instantiation of the more general space-time trade-off data structures that were recently described in~\cite{AndoniLRW17,christiani2017framework}. It can also be seen as a variant of the empirical approach discussed in~\cite{eshghi2008locality} with a theoretical analysis. 
Compared to~\cite{AndoniLRW17,christiani2017framework}, it provides much easier parameterization and a simpler way to make it efficient. We provide here a succinct description of the data structure. All proofs can be found in Appendix~\ref{app:tableau} in the supplemental material.

In this section it will be easier to state bounds on the running time with respect to inner product similarity on unit length vectors in $\mathbb{R}^d$. 
We define the $(\alpha, \beta)$-NN problem analogously to $(c, r)$-NN, replacing the distance thresholds $r$ and $cr$ with $\alpha$ and $\beta$ such that $-1 < \beta < \alpha < 1$. 
This means that the algorithm guarantees that if there exists a point $\p$ with inner product at least $\alpha$ with the query point, 
the data structure returns a point $\p^\ast$ with inner product at least $\beta$ with constant probability. 
The reader is reminded that for unit length vectors we have the relation $\norm{\p - \q}^2_2 = 2 - 2 \ip{\p}{\q}$. We will use the notation $B_S(\q, \alpha) = \{\p \in S \mid \ip{\p}{\q} \geq \alpha\}$ and $b_S(\q, \alpha) = |B_S(\q, \alpha)|$. We define the $\alpha$-NNIS problem analogously to $r$-NNIS with respect to inner product similarity.

\subsection{Description of the data structure}

\paragraph{Construction}
  Given $m \geq 1$ and $\alpha < 1$, 
  let $t = \lceil 1/(1 - \alpha^2)\rceil$ and assume that $m^{1/t}$ is an integer.
  First, choose $tm^{1/t}$ random vectors $\a_{i, j}$, for $i \in [t], j \in [m^{1/t}]$, where each $\a_{i, j} = (a_1, \ldots, a_d) \sim \mathcal{N}(0, 1)^d$ is a vector of $d$ independent and identically  distributed standard normal Gaussians.%
  \footnote{As tradition in the literature, we assume that a memory word suffices for reaching the desired precision. See~\cite{Charikar02} for a discussion.}
Next, consider a point $\p \in S$.
For $i\in [t]$, let $j_i$ denote the index maximizing $\ip{\p}{\a_{i,j_i}}$.
Then we map the index of $\p$ in $S$ to the bucket $(j_1,\ldots,j_t) \in [m^{1/t}]^t$, and use a hash table to keep track of all non-empty buckets. 
Since a reference to each data point in $S$ is stored exactly once, the space usage can be bounded by $O(n + tm^{1/t})$.

\paragraph{Query} Given the query point $\q$, evaluate the dot products with all $t m^{1/t}$ vectors $\a_{i, j}$. 
For $\varepsilon \in (0, 1)$, let $f(\alpha, \varepsilon) = \sqrt{2(1-\alpha^2)\ln(1/\varepsilon)}$.
For $i \in [t]$, let $\Delta_{\q, i}$ be the value of the largest inner product of $\q$ with the vectors $\a_{i, j}$ for $j \in [m^{1/t}]$. 
Furthermore, let $\mathcal{I}_i = \{j \mid \ip{\a_{i,j}}{\q} \geq \alpha \Delta_{\q, i} - f(\alpha, \varepsilon)\}$. 
The query algorithm checks the points in all buckets $(i_1, \ldots, i_t) \in \mathcal{I}_1 \times \dots \times \mathcal{I}_t$, one bucket after the other. 
If a bucket contains a close point, return it, otherwise return $\perp$.

\begin{theorem}
Let $S \subseteq \X$ with $|S| = n$, $-1 < \beta < \alpha < 1$, and let $\varepsilon > 0$ be a constant. 
Let $\rho = \frac{(1-\alpha^2)(1-\beta^2)}{(1 - \alpha\beta)^2}$.
There exists $m = m(\alpha, \beta, n)$ such that
the data structure described above solves the $(\alpha, \beta)$-NN problem with probability at least $1 - \varepsilon$ in linear space and expected time $n^{\rho + o(1)}$.
\end{theorem}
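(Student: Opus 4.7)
The plan is to decompose the analysis into (i) correctness for a single close point, (ii) the expected number of buckets enumerated at query time, and (iii) the expected number of $\beta$-far points inspected, and then to choose $m$ to balance the two cost terms.

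For correctness, fix $\p \in S$ with $\ip{\p}{\q} = \alpha$ (the case $\ip{\p}{\q} > \alpha$ is strictly easier), a coordinate $i \in [t]$, and let $j^\star = \arg\max_j \ip{\a_{i,j}}{\p}$ be the $i$-th component of the bucket of $\p$. I would use the Gaussian decomposition
\[
\ip{\a_{i,j^\star}}{\q} = \alpha\, \ip{\a_{i,j^\star}}{\p} + \sqrt{1-\alpha^2}\, Z = \alpha\, \Delta_{\p, i} + \sqrt{1-\alpha^2}\, Z,
\]
where $Z \sim \mathcal{N}(0,1)$ is independent of $\Delta_{\p,i}$. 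Standard extreme-value concentration gives $\Delta_{\p,i}, \Delta_{\q,i} = \sqrt{(2/t)\ln m}\,(1+o(1))$, so the event $j^\star \in \mathcal{I}_i$ reduces to $\sqrt{1-\alpha^2}\, Z \geq -f(\alpha,\varepsilon) - o(1)$, which by a Mills-ratio tail bound holds with probability at least $1 - \varepsilon - o(1)$. A union bound over $t = \Theta(1)$ coordinates, after rescaling $\varepsilon$ by a constant factor, yields the claimed $1-\varepsilon$ correctness.

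For the query time, the expected number of inspected buckets factors as $\prod_{i=1}^t \mathbb{E}[|\mathcal{I}_i|]$. Conditional on $\Delta_{\q,i} \approx \sqrt{(2/t)\ln m}$, each of the $m^{1/t}$ Gaussians $\ip{\a_{i,j}}{\q}$ exceeds $\alpha\Delta_{\q,i} - f(\alpha,\varepsilon)$ with probability $m^{-\alpha^2/t + o(1)}$, so $\mathbb{E}[|\mathcal{I}_i|] = m^{(1-\alpha^2)/t + o(1)}$ and the total is $m^{1-\alpha^2 + o(1)}$ buckets. For the far-point inspections, I would fix $\mathbf{r} \in S$ with $\ip{\mathbf{r}}{\q} = s \leq \beta$, let $k^\star = \arg\max_j \ip{\a_{i,j}}{\mathbf{r}}$, and apply the same decomposition with correlation $s$:
\[
\Pr[k^\star \in \mathcal{I}_i] = \Pr\bigl[s\, \Delta_{\mathbf{r}, i} + \sqrt{1-s^2}\, Z' \geq \alpha\Delta_{\q,i} - f(\alpha,\varepsilon)\bigr] = m^{-(\alpha-s)^2/(t(1-s^2)) + o(1)}.
\]
Raising to the $t$-th power and summing over the at most $n$ candidate $\mathbf{r}$'s, and noting that $s = \beta$ is the worst case for a $\beta$-far point, bounds the expected number of far inspections by $n \cdot m^{-(\alpha-\beta)^2/(1-\beta^2) + o(1)}$.

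Balancing $m^{1-\alpha^2}$ against $n \cdot m^{-(\alpha-\beta)^2/(1-\beta^2)}$ sets $m = n^{(1-\beta^2)/(1-\alpha\beta)^2}$, using the algebraic identity $(1-\alpha^2)(1-\beta^2) + (\alpha-\beta)^2 = (1-\alpha\beta)^2$ to simplify the combined exponent. Substituting back gives query time $m^{1-\alpha^2+o(1)} = n^{(1-\alpha^2)(1-\beta^2)/(1-\alpha\beta)^2 + o(1)} = n^{\rho + o(1)}$. Linear space follows since each point is stored once and $tm^{1/t} = \Theta(n^\rho) = o(n)$ for the constant $t = \lceil 1/(1-\alpha^2)\rceil$. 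The main technical obstacle I anticipate is the careful tracking of lower-order error terms in the Gaussian-maximum concentration: the additive slack $f(\alpha,\varepsilon)$ is a fixed constant, while $\Delta_{\q,i}$ fluctuates at order $1/\sqrt{\ln m}$, so one must verify that these fluctuations land in the $o(1)$ exponent rather than eroding either the constant-probability correctness guarantee or the exponent $\rho$.
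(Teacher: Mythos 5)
Your proposal is correct and follows essentially the same route as the paper's proof in Appendix~B: spherical-symmetry Gaussian decomposition, concentration of the maximum of $m^{1/t}$ standard normals, Mills-ratio tail bounds for the close/far collision probabilities, and the same balancing of $m^{1-\alpha^2}$ against $n\cdot m^{-(\alpha-\beta)^2/(1-\beta^2)}$ via the identity $(1-\alpha^2)(1-\beta^2)+(\alpha-\beta)^2=(1-\alpha\beta)^2$. The only difference is organizational---you analyze the tensored structure coordinate-by-coordinate from the start, whereas the paper first analyzes a single layer of $m$ filters and then raises the relevant probabilities to the power $t$---and the lower-order fluctuation issue you flag at the end is exactly what the paper handles by conditioning on explicit high-probability bounds $\Delta_\q \geq t_\q$ and $\Delta_\x \leq t_\x$ with $t_\q/t_\x = 1 + O(1/\log\log n)$, so that these terms land in the $o(1)$ of the exponent.
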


We remark that this result is equivalent to running time statements found in~\cite{christiani2017framework} for the linear space regime, but the method is considerably simpler. The analysis connects storing data points in the list associated with the largest inner product with well-known bounds on the order statistics of a collection of standard normal variables as discussed in~\cite{david2004order}.

\subsection[Solving a-NNIS]{Solving $\alpha$-NNIS}
\label{sec:tableau:independent}

\paragraph*{Construction} Set $L = \Theta(\log n)$ and build $L$ independent data structures $\text{DS}_1, \ldots, \text{DS}_L$ as described above. For each $p \in S$, store a reference from $p$ to the $L$ buckets it is stored in.

\paragraph*{Query} For query $\q$, evaluate all $tm^{1/t}$ filters in each individual DS$_\ell$.
Let $\mathcal{I}$ be the set of buckets $(i_{\ell, 1},\ldots,i_{\ell,t})$ above the query threshold, for each $\ell \in [L]$, and set $T = |\mathcal{I}|$. Enumerate the buckets from 1 to $T$ in a random order over all repetitions.
Let $k_i$ be the number of data points in bucket $i \in [T]$, and set $K= \sum k_i$. 
First, check for the existence of a near neighbor by running the standard query algorithm described above on every individual data structure. This takes expected time $n^{\rho + o(1)} + O\left(\frac{b_S(\q, \beta)}{b_S(\q, \alpha) + 1} \log n\right)$, assuming points in a bucket appear in random order. If no near neighbor exists, output $\perp$ and return. Otherwise, perform the following steps until success is declared:

\begin{enumerate}
\renewcommand{\theenumi}{\Alph{enumi}}
\item Choose a random integer $i$ in $\{1, \ldots, T\}$, where the probability of choosing $i$ equals $k_i/K$.
\item Choose a random point $\p$ from bucket $i$ uniformly at random. Using the $L$ references of $\p$ to its buckets, 
    compute $c_\p$, $0 \leq c_\p \leq L$, the number of times $\p$ occurs in the $T$ buckets, in time $O(L)$.
\item If $\p$ is a near neighbor, report $\p$ and declare success with probability $1/c_\p$.
\item If $\p$ is a far point, remove $\p$ from the bucket and decrement $k_i$ and $K$.
\end{enumerate}
After a point $\p$ has been reported, move all far points removed during the process into their bucket again. We assume that removing and inserting a point takes constant time in expectation. 

\begin{theorem}
    Let $S \subseteq \X$ with $|S| = n$ and $-1 < \beta < \alpha < 1$. The data structure described above solves the $\alpha$-NNIS problem in nearly-linear space and expected time $n^{\rho + o(1)} + O((b_S(\q, \beta)/(b_S(\q, \alpha) + 1)) \log^2 n)$. 
\end{theorem}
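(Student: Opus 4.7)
The plan is to separately analyze the output distribution, independence across queries, space, and expected query time. To handle the output distribution, I would first isolate a high-probability event $\mathcal{E}$ on the construction randomness under which (i) the standard existence test run in each $\textnormal{DS}_\ell$ correctly decides whether $B_S(\q,\alpha)$ is empty, and (ii) every $\p\in B_S(\q,\alpha)$ is stored in an inspected bucket of at least one $\textnormal{DS}_\ell$. Both hold for a single query with probability $1-\varepsilon^L$ by independence of the $L$ structures, so choosing $L=\BT{\log n}$ and union-bounding over the (up to $n$) queries and the (up to $n$) potential near neighbors per query brings the total failure probability to $\BO{1/n}$. Conditioned on $\mathcal{E}$ and on $B_S(\q,\alpha)\neq\emptyset$, the key algebraic observation is that in a single iteration the joint probability of choosing reference $(i,\p)$ equals $(k_i/K)(1/k_i)=1/K$; hence for every near neighbor $\p$ the probability of success in that iteration is $c_\p\cdot(1/K)\cdot(1/c_\p)=1/K$, the same value for all $\p\in B_S(\q,\alpha)$. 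Conditioning on acceptance therefore yields $\PR{\textnormal{output}=\p}=1/b_S(\q,\alpha)$, and the identity persists through the removal of far references because near-neighbor counts $c_\p$ are unaffected while $K$ only decreases.

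For independence and space, note that restoring every removed far reference after a query returns the data structure to its pre-query state; together with fresh randomness drawn per query this gives the conditional independence required by Definition~\ref{nnis}. Space is nearly linear since each point contributes one reference in each of the $L=\BT{\log n}$ structures and the Gaussian filters occupy $\BO{tm^{1/t}}=n^{o(1)}$ space per structure.

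For expected query time, the $Ltm^{1/t}$ filter evaluations cost $n^{o(1)}$ and the $L$ existence checks cost $n^{\rho+o(1)}$ in total. For the sampling loop, let $J$ be the number of iterations executed and $K_0$ the initial total reference count. Since $K_i$ is monotone non-increasing and the per-iteration acceptance probability equals $b_S(\q,\alpha)/K_i\geq b_S(\q,\alpha)/K_0$, a geometric tail bound gives $\EXP{J}\leq K_0/b_S(\q,\alpha)$; each iteration takes $\BO{L}$ time to follow the $L$ back-references of $\p$ and compute $c_\p$. To bound $\EXP{K_0}=\sum_\p \EXP{c_\p}$ I would split by inner product: points with $\ip{\p}{\q}\geq\beta$ contribute at most $L$ each, for a total of $L\cdot b_S(\q,\beta)$, while $\sum_{\p:\ip{\p}{\q}<\beta}\EXP{c_\p}=L\cdot n^{\rho+o(1)}$ by the same filter-collision estimate used in the preceding theorem. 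Multiplying through gives an expected sampling time of $\BO{(L^2/b_S(\q,\alpha))\cdot b_S(\q,\beta)}+\BO{(L^2/b_S(\q,\alpha))\cdot n^{\rho+o(1)}}$; the first summand equals $\BO{(b_S(\q,\beta)/b_S(\q,\alpha))\log^2 n}$ and the second is absorbed into $n^{\rho+o(1)}$ since $b_S(\q,\alpha)\geq 1$ and $L^2=n^{o(1)}$. The hardest step is the bound on $\EXP{K_0}$: points in the band $\beta\leq\ip{\p}{\q}<\alpha$ can individually have $c_\p$ as large as $L$, so converting their aggregate contribution into the ratio $b_S(\q,\beta)/b_S(\q,\alpha)$ rather than $b_S(\q,\beta)$ alone relies on the Gaussian order-statistics estimate used in the preceding theorem together with the existence check that guards the loop against the case $b_S(\q,\alpha)=0$.
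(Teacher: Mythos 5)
Your proposal is correct and follows essentially the same route as the paper's proof: the uniformity argument via the cancellation $\frac{c_\p}{K}\cdot\frac{1}{c_\p}=\frac{1}{K}$, the restoration of removed far references for independence, and the split of the query cost into an $n^{\rho+o(1)}$ part (filters, existence check, far-point removals) and an $O((b_S(\q,\beta)/b_S(\q,\alpha))\log^2 n)$ sampling loop are all identical to the paper. The only (harmless) difference is bookkeeping: you bound the total number of loop iterations by a geometric variable with parameter $b_S(\q,\alpha)/K_0$ and then bound $\EXP{K_0}$, whereas the paper charges far-point iterations to the $n^{\rho+o(1)}$ term and lower-bounds the per-round success probability by $\Omega(b_S(\q,\alpha)/(b_S(\q,\beta)\log n))$ over the non-far references only; both yield the stated bound.
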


\begin{proof}
    Set $L = \Theta(\log n)$ such that with probability at least $1-1/n^2$, all points in $B_S(\q, \alpha)$ are found in the $T$ buckets. 
    Let $\p$ be an arbitrary point in $B_S(\q, \alpha)$. 
    We show that $\p$ is returned by the query algorithm with probability $1/b_S(\q, \alpha)$. 
    This statement follows by the more general observation that the point $\p$ picked in step \textsf{B} is a weighted uniform point among all points present in the buckets after $i$ iterations of the algorithm. That means that if there are $K'$ points in the buckets, the probability of choosing $\p$ is $c_\p/K'$. If $\p$ resides in bucket $i$, the probability of reporting $\p$ is $k_i/K' \cdot 1/k_i \cdot 1/c_\p = 1/(K'c_\p)$. Since $\p$ is stored in $c_\p$ different buckets, the
    probability of reporting $\p$ is $1/K'$. Since this property holds in each round, each near neighbor has equal chance of being reported by the algorithm.   

    We proceed to proving the running time statement in the case that there exists a point in $B_S(\q, \alpha)$. (See the discussion above for the case $b_S(\q, \alpha) = 0$.)
    Observe that evaluating all filters, checking for the existence of a near neighbor, removing far points, and putting far points back into the buckets contributes $n^{\rho + o(1)}$ to the expected running time (see Appendix~\ref{app:tableau} in the supplemental material for details). 
    We did not account for repeatedly carrying out steps \textsf{A}--\textsf{C} yet for rounds in which we choose a non-far point. 
    To this end, we next find a lower bound on the probability that the algorithm declares success in a single such round.
    First, observe that there are $O(b_S(\q, \beta) \log n)$ non-far points in the $T$ buckets (with repetitions).
    Fix a point $\p \in B_S(\q, \alpha)$. 
    With probability $\Omega(c_\p/(b_S(\q, \beta) \log n))$, $\p$ is chosen in step B. 
    Thus, with probability $\Omega(1/(b_S(\q, \beta) \log n))$, success is declared for point $\p$. 
    Summing up probabilities over all points in $B_S(\q, \alpha)$, we find that the probability of declaring success in a single round is $\Omega(b_S(\q, \alpha)/(b_S(\q, \beta) \log n))$. 
    This means that we expect $O(b_S(\q, \beta)\log n/b_S(\q, \alpha))$ rounds until the algorithm declares success.
    Each round takes time $O(\log n)$ for computing $c_\p$ (which can be done by marking all buckets that are enumerated), so we expect to spend time $O((b_S(\q, \beta) / b_S(\q, \alpha))\log^2 n)$ for these iterations, which concludes the proof.
\end{proof}


\section{Experimental Evaluation}\label{sec:evaluation} 

This section presents a principled experimental evaluation that sheds light on the general fairness implications of our problem definitions. 
The evaluation contains both a validation of the (un)fairness
of traditional approaches in a recommendation setting on real-world datasets, an empirical study of unfairness in approximate approaches (such as presented in~\cite{HarPeledM19}), and a short discussion of the additional cost introduced by solving the exact neighborhood problem. The code, raw result files, and the experimental log containing more details are available at \url{https://github.com/alfahaf/fair-near-neighbor-search}. 

We stress that the evaluation is meant to be orthogonal to the data structure questions solved in the previous sections. 
The aim of this evaluation is 
to complement the theoretical study with a case study focusing on the fairness implications of solving the near-neighbor problem. 

To carry out this study, we chose 
the MovieLens and Last.FM datasets available at \url{https://grouplens.org/datasets/hetrec-2011} and converted them to a set representation. 
This representation was obtained by collecting, for each user, all movies rated at least 4 (MovieLens, 2\,112 users, 65\,536 unique movies) and the top-20 artists (Last.FM, 1\,892 users, 18\,739 unique artists), respectively. 
The average set size is 178.1 ($\sigma = 187.5$) and 19.8 ($\sigma = 1.78$), respectively. 
We measure the similarity of two user sets $\mathbf{X}$ and $\mathbf{Y}$ by their Jaccard similarity $J(\mathbf{X},\mathbf{Y}) = |\mathbf{X} \cap \mathbf{Y}| / |\mathbf{X} \cup \mathbf{Y}|$. 
For each dataset, we pick 50 queries randomly from a set of ``interesting''  users. A user $\mathbf{X}$ is \emph{interesting} if there exist at least 40 other users with Jaccard similarity at least $0.2$ with $\mathbf{X}$.

\paragraph{Algorithms.} Since we put our focus on Jaccard similarity, we implemented LSH using standard MinHash~\cite{bro97b} and applying the 1-bit scheme of Li and König~\cite{LiK10}. 
The implementation takes two parameters $K$ and $L$, as discussed in Section~\ref{sec:nnLSH}. 
We set $K$ such that we expect no more than $5$ points with Jaccard similarity at most 0.1 to have the same hash value as the query.
We set $L$ such that with probability at least 99\%, a given point with similarity at least $r \in \{0.15, 0.2, 0.25, 0.3\}$ is present in the $L$ buckets.
The \emph{standard LSH} implementation checks every bucket, stopping as soon as it finds a point with similarity at least $r$. 
We also consider \emph{fair LSH}, which we implemented in the naïve way of collecting all points with similarity at least $r$ found in the buckets, removing duplicates, and returning one of the remaining points at random.

\paragraph{Objectives of the Experiments} Our experiments are tailored to answer the following questions:

\begin{enumerate}
    \item[(Q1)] How (un)fair is the output of standard LSH compared to fair LSH?
    \item[(Q2)] How fair is the output of an algorithm solving the approximate neighborhood version?
    \item[(Q3)] What is the extra cost term for solving the exact neighborhood problem?
\end{enumerate}

\begin{figure*}[t]
    \includegraphics[width=.49\textwidth]{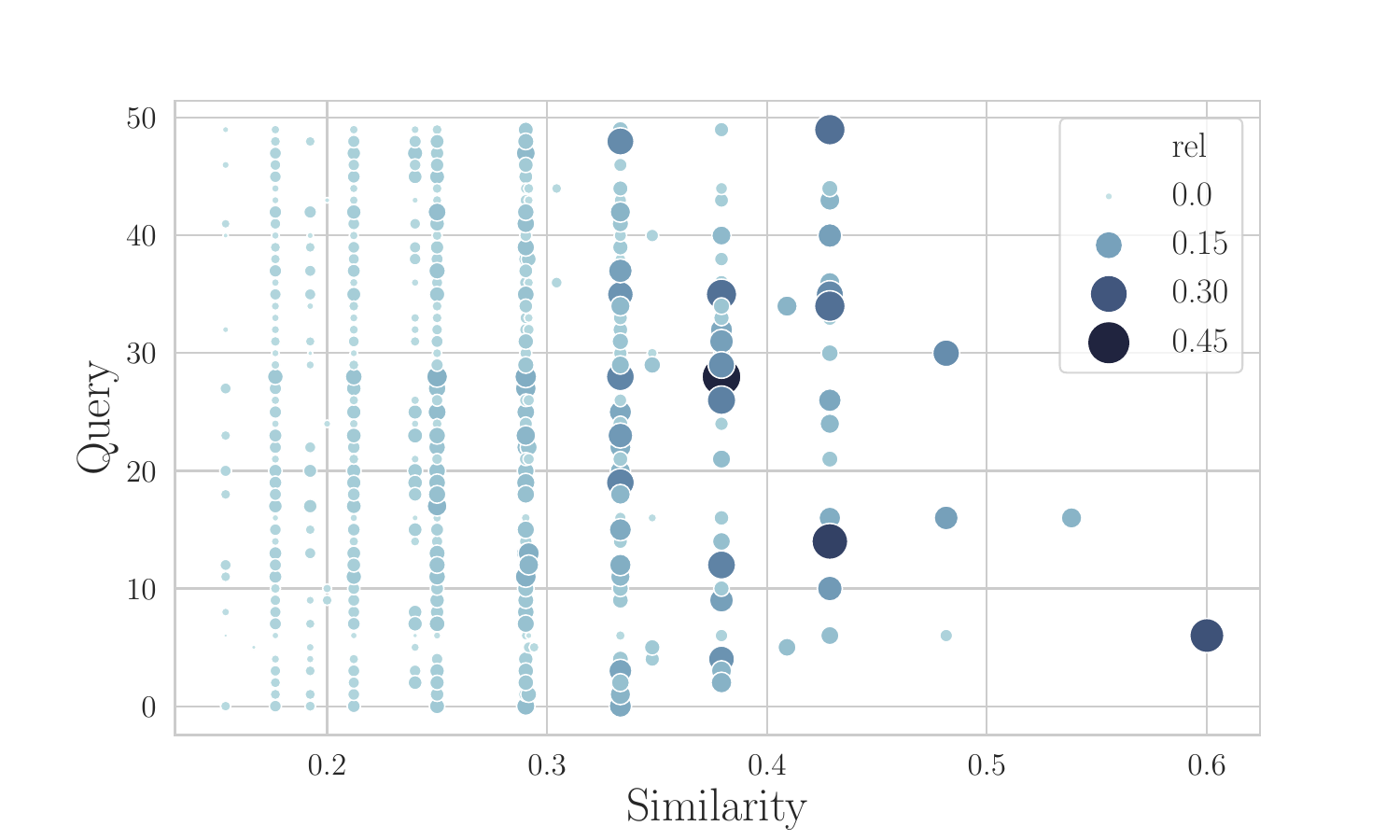}
    \includegraphics[width=.49\textwidth]{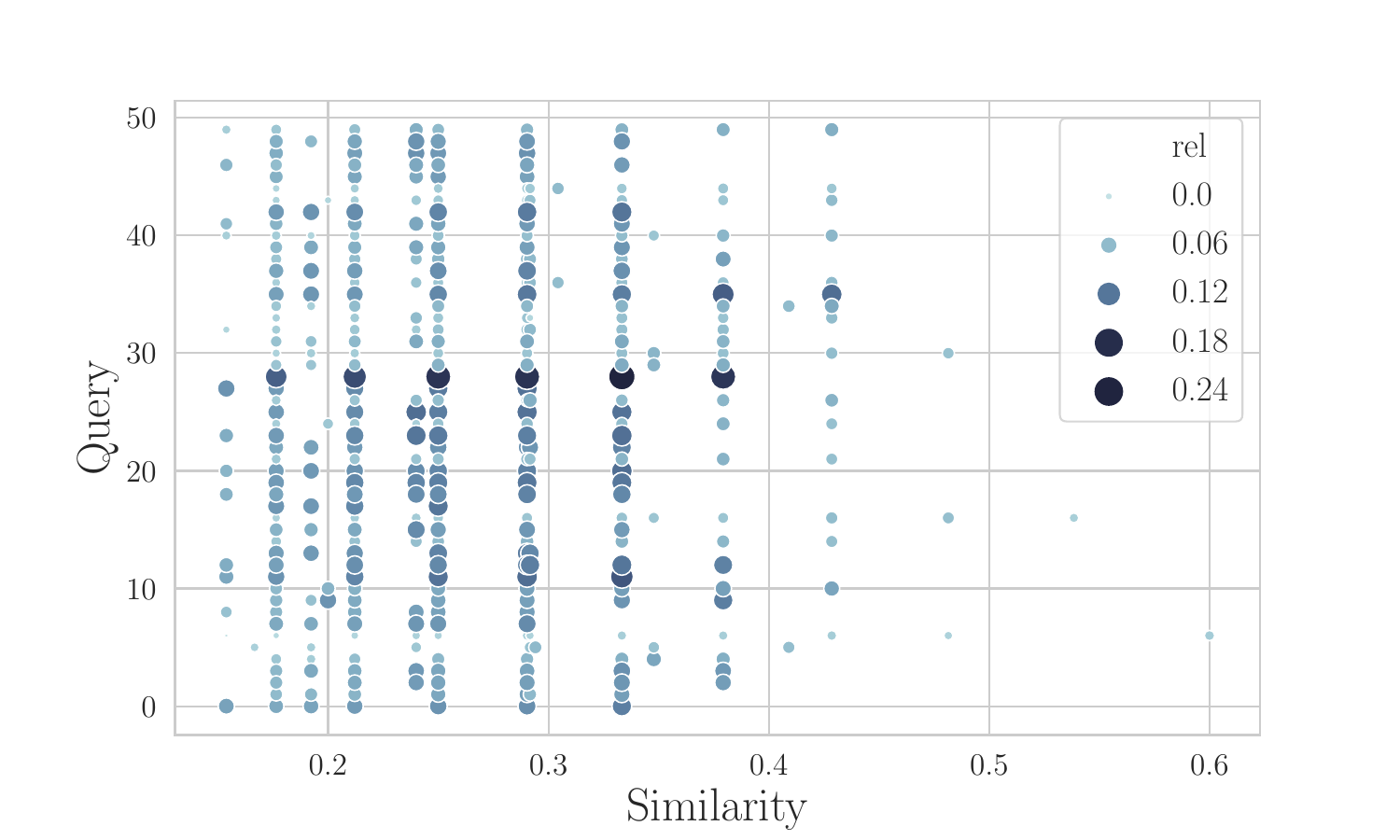}
    \includegraphics[width=.49\textwidth]{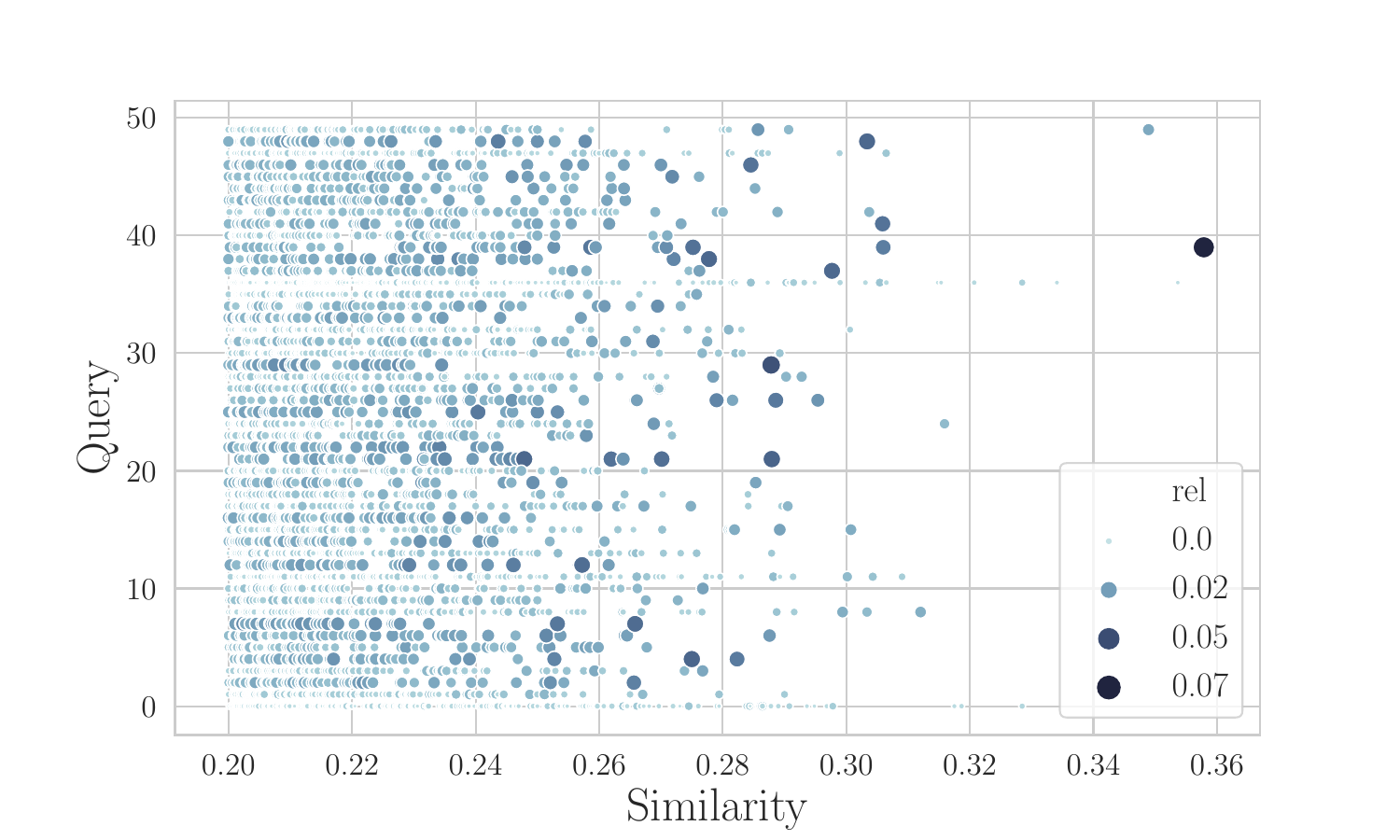}
    \includegraphics[width=.49\textwidth]{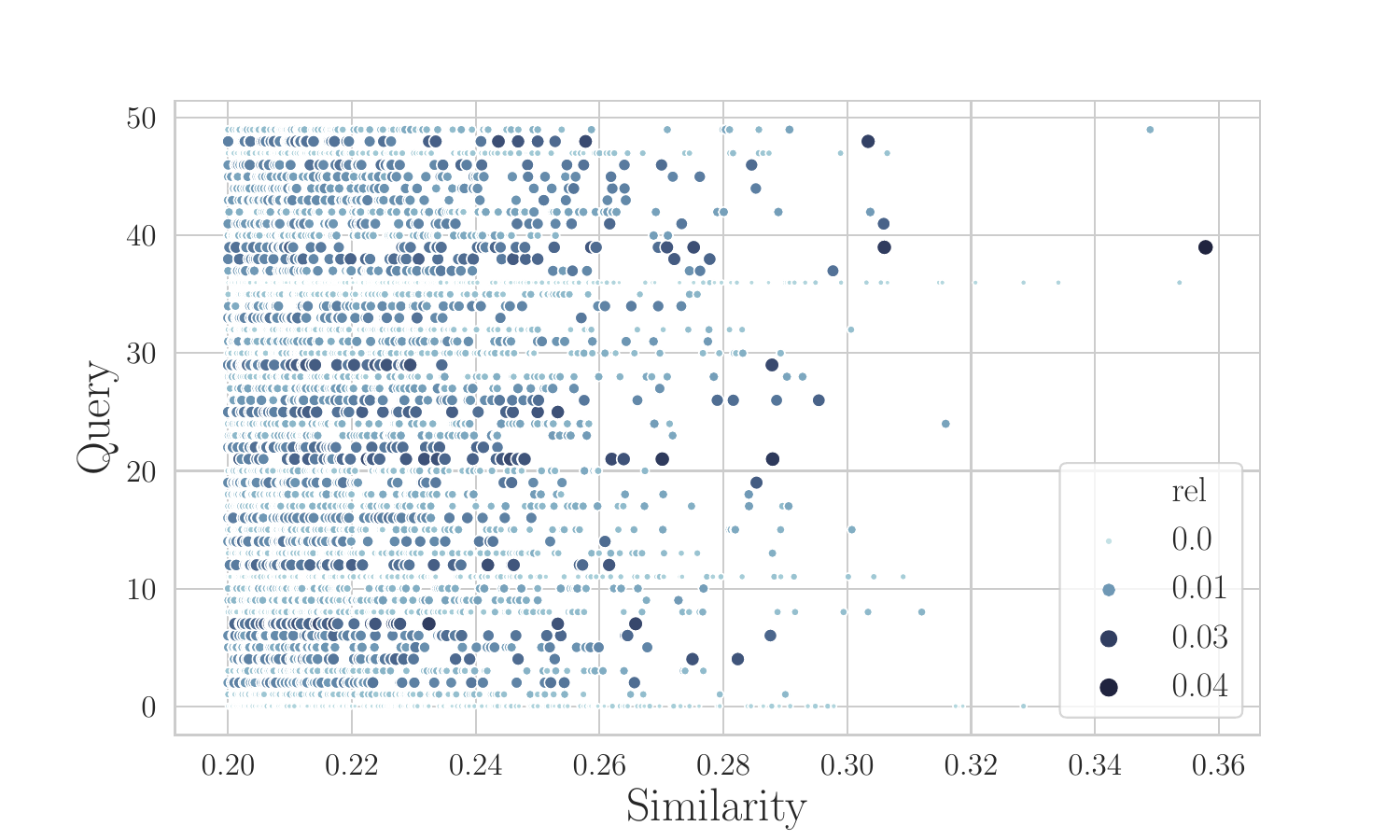}
    \caption{Scatter plot of relative frequencies of reported points; $x$-axis: Similarity to query point; $y$-axis: id of the query. Each point represents the average rel. frequency among all points having this similarity for a fixed query point. Higher color intensity is associated with higher frequency of being reported. Left: standard LSH; right: fair LSH; top: Last.FM ($r = 0.15$), bottom: MovieLens ($r = 0.2$).}
    \label{fig:standardfair}
\end{figure*}

\subsection{Comparing Standard LSH and Fair LSH (Q1)}

For each $r \in \{0.15, 0.2, 0.25, 0.3\}$, we build the standard LSH data structure with the parameters mentioned above. For each of the 50 queries, we query using the standard approach and the fair approach of collecting all points and then returning a neighbor sampled uniformly at random. As result, we collect the reported neighbor. We repeat this process 26\,000 times independently. 

Figure~\ref{fig:standardfair} exemplifies plots of the output distribution obtained from these experiments. The $x$-axis represents the similarity to the query,  
the $y$-axis represents the 50 different queries. 
Each point depicts the relative frequency of a point being reported at the similarity with the given query, where frequencies are averaged over all points that have the same similarity. 
A fair output is produced when for each query, each associated point to this query has approximately the same relative frequency, i.e., the same
size and color intensity in the plot. 
For the Last.FM dataset on the top, we can clearly see that standard LSH (left) is biased towards points with higher similarity. 
For example, the seventh query point has a near neighbor with similarity 0.6 that is clearly overrepresented. 
In contrast, fair LSH provides a uniform output distribution. 
The bottom row shows the output for the MovieLens dataset. Note that the output size is much larger for this dataset and users are much closer with respect to their similarity to the query point. Still, we see a clear gradient in the color intensity for standard LSH (left), meaning that the output is biased towards points closer to the query. Such a gradient is not visible for fair LSH (right). 

From these observations, we can conclude that a straight-forward LSH implementation introduces bias on real-world datasets. Using algorithms that solve the independent sampling version of the $r$-NN problem eliminate such bias.

\subsection{Fairness in the approximate version (Q2)}\label{sec:approx_fair}

We turn our focus to the approximate neighborhood sampling problem as discussed in Section~\ref{sec:previous:work}, which was studied in~\cite{HarPeledM19}. 
In this version of the problem, the algorithm may return points in $B(\q, cr) \setminus B(\q, r)$ as well, which speeds up the query since it avoids additional filtering steps. 
As claimed in Section~\ref{sec:previous:work}, it is not clear whether this notion
provides fairness in the sense of equal opportunity. In the following, we will provide a concrete example that shows that this is not the case. 

To this end, 
let us define the following dataset over the universe $\mathcal{U} = \{1,\ldots,30\}$. 
We let $\mathbf{X} = \{16, \ldots, 30\}$,  $\mathbf{Y} = \{1, \ldots, 18\}$, and $\mathbf{Z} = \{1, \ldots, 27\}$.
Further, we let $\mathcal{M}$ contain all subsets of $\mathbf{Y}$ having at least $15$ elements (excluding $\mathbf{Y}$ itself). 
The dataset is the collection of $\mathbf{X}, \mathbf{Y}, \mathbf{Z}$ and all sets in $\mathcal{M}$. Let the query $\mathbf{Q}$  be the set $\{1, \ldots, 30\}$. To build the data structure, we  
set $r = 0.9$ and $cr = 0.5$. It is easy to see that $\mathbf{Z}$ is the nearest neighbor with similarity 0.9. $\mathbf{Y}$ is the second-nearest point with similarity 0.6, but $\mathbf{X}$ is among the points with lowest similarity of 0.5. Finally, we note that each $\mathbf{x}\in \mathcal{M}$ has similarity ranging from 0.5 to $0.5\overline{6}$.
Under the approximate neighborhood sampling problem, all points can be returned for the given query.

As in the previous subsection, the algorithm collects all the points found in the $L$ buckets and returns a point picked uniformly at random among those points having similarity at least 0.5. 
Figure~\ref{fig:difficult:sampling} shows the sampling probabilities 
of the sets $\mathbf{X}, \mathbf{Y}, \mathbf{Z}$.
The plot clearly shows that the notion of approximate neighborhood does not provide a sensible guarantee on the individual fairness of users. The set $\mathbf{X}$ is more than 50 times more likely than $\mathbf{Y}$ to be returned, even though $\mathbf{Y}$ is more similar to the query. This is due to the clustered neighborhood of $\mathbf{Y}$, making many other points appear at the same time in the buckets. On the other hand, $\mathbf{X}$ has no close points in its neighborhood (except
$\mathbf{Z}$ and $\mathbf{Q}$). 

We remark that we did not observe this influence of clustered neighborhoods on the two real-world datasets. However, it is important to notice that approximate neighborhood \emph{could} introduce unintentional bias
and, furthermore, can be exploited by an adversary to discriminate a given user (e.g., an adversary can create a set of objects $\mathcal{M}$ that obfuscate a given entry $\mathbf{Y}$.)

Based on this example, one could argue that the observed sampling behavior is intentional. 
If the goal is to find good representatives in the neighborhood of the query, then it certainly seems preferable that $\mathbf{X}$ is reported with such high probability (which is roughly the same as all points in $\mathcal{M}$ and $\mathbf{Y}$ combined). 
Our notion of fairness would make the output less diverse, since $\mathbf{X}$ clearly stands out from many other points in the neighborhood, but it is supposed to be treated in the very same way. 
Such a trade-off between diversity and fairness was also observed, for example, by Leonhardt et al.~in~\cite{LeonhardtAK18}.

\begin{figure}[t]
    \includegraphics[width=\columnwidth]{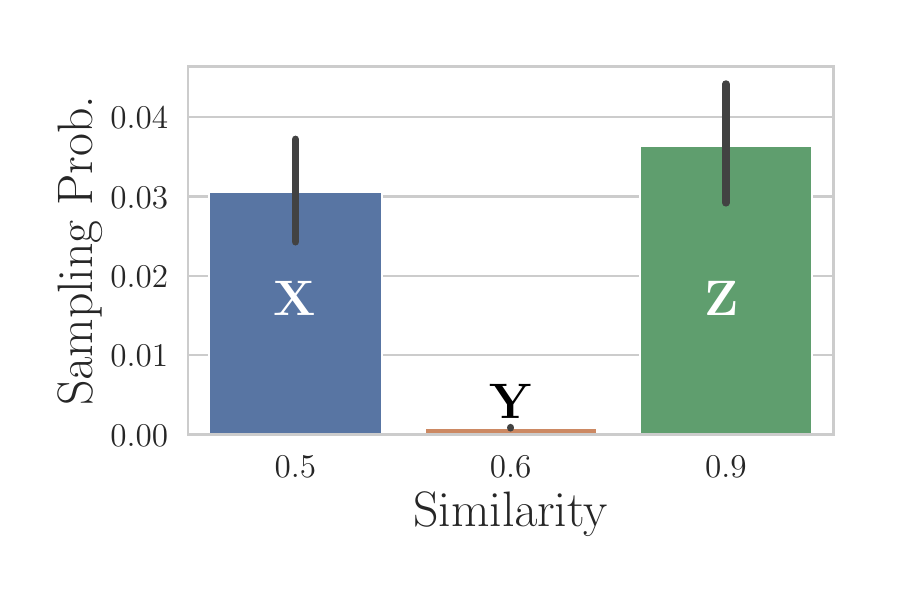}
    \caption{Empirical sampling probabilities of points $\mathbf{X}, \mathbf{Y}, \mathbf{Z}$ with error bars showing 25\% and 75\% quartiles.}
    \label{fig:difficult:sampling}
\end{figure}

\subsection{Additional cost factor for solving the exact neighborhood variant (Q3)}\label{sec:costRatio}
The running time bounds of the algorithms  in Sections~\ref{sec:sampling}--\ref{sec:tableau} have an additional running time factor $\tilde{O}(b_S(\q, cr)/b_S(\q, r))$, putting in relation the number of points at distance at most $cr, c\geq 1,$ (or similarity at least $cr, c \leq 1,$) to those at threshold $r$. The values $r$ and $cr$ are the distance/similarity thresholds picked when building the data structure. 
In general, a larger gap between $r$ and $cr$ makes the $n^\rho$ term in the running times smaller. 
However, the additive cost $\tilde{O}(b_S(\q, cr)/b_S(\q, r))$ can potentially be prohibitively large for worst-case datasets.

Figure~\ref{fig:ratio} depicts the ratio of points with similarity at least $cr$ and $r$ for the two real-world datasets. 
On the Last.FM dataset, we can see that even for very large gaps, the additional cost term is reasonably small. 
(In particular, much smaller than the number of repetitions.) 
On the MovieLens dataset for $r = 0.25$, we see ratios of around 300 for $c \leq 0.25$. 
In this case, it roughly matches the number of repetitions done for this parameter choice. 
In general, however, it shows that a careful inspection of dataset/query set characteristics might be necessary to find good parameter choices that balance these cost terms.

\begin{figure}[t]
    \includegraphics[width=\columnwidth]{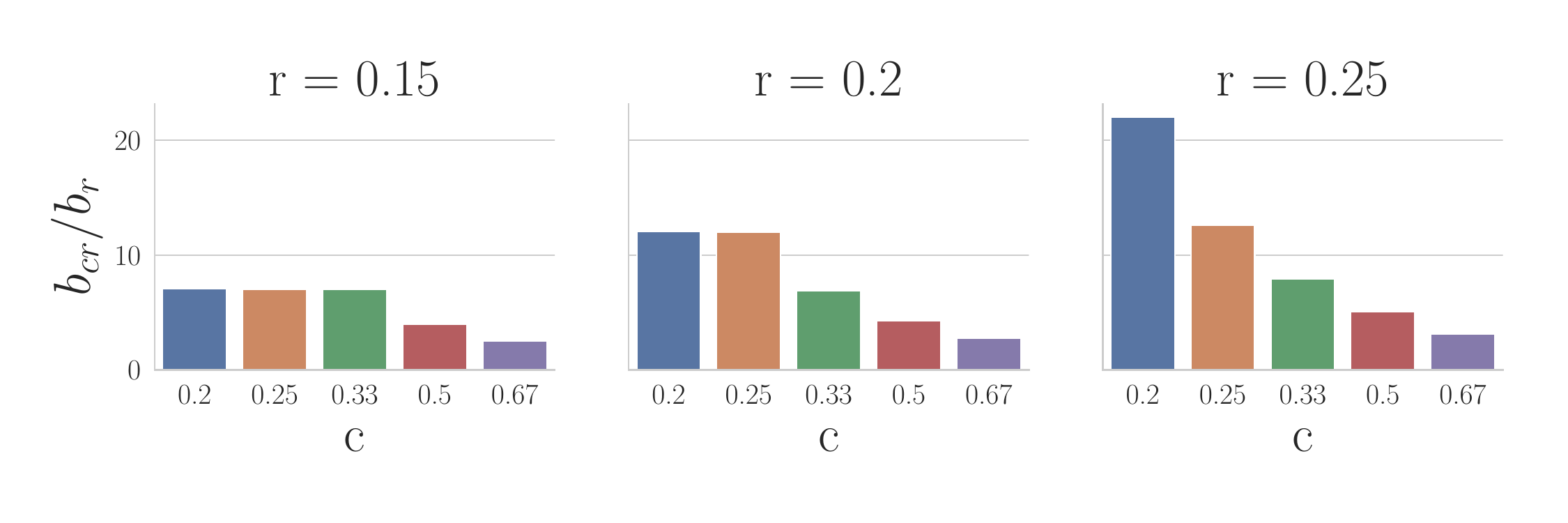}
    \includegraphics[width=\columnwidth]{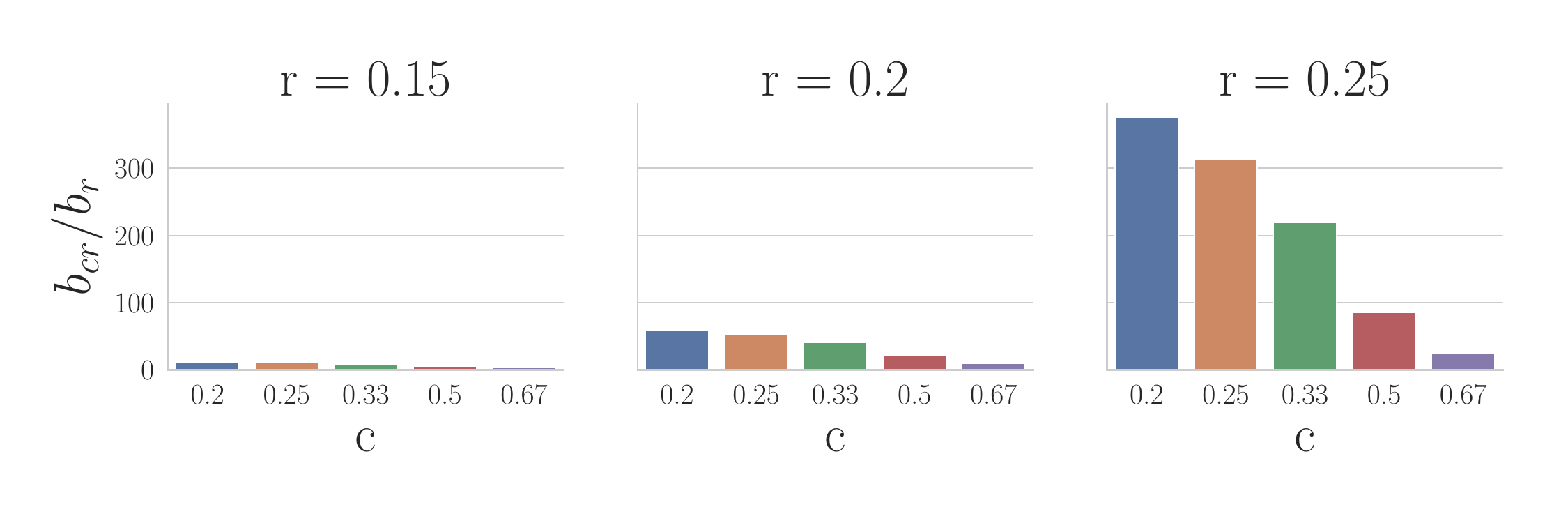}
    \caption{Ratio of number of points with similarity at least $cr$ and number of points with similarity at least $r$ for $r\in\{0.15, 0.2, 0.25\}$ and $c = \{1/5, 1/4, 1/3, 2/3\}$;
    top: Last.FM, bottom: MovieLens.}
    \label{fig:ratio}
\end{figure}

\section{Conclusion}
In this paper, we have investigated a possible definition of fairness in similarity search by connecting the notion of ``equal opportunity'' to independent range sampling.
An interesting open question is to investigate the applicability of our data structures for problems like discrimination discovery \cite{LuongRT11}, diversity in recommender systems~\cite{adomavicius2014optimization},  privacy preserving similarity search \cite{Riazi16}, and estimation of kernel density \cite{CharikarS17}.
Moreover, it would be interesting to investigate techniques for providing incentives (i.e., reverse discriminations~\cite{LuongRT11}) to prevent discrimination: an idea could be to merge the data structures in this paper with  distance-sensitive hashing functions in~\cite{Aumuller18}, which allow to implement hashing schemes where the collision probability is an (almost) arbitrary function of the distance.
Finally, the techniques presented here require a manual trade-off between the performance of the LSH part and the additional running time contribution from finding the near points among the non-far points. From a user point of view, we would much rather prefer a parameterless version of our data structure that finds the best trade-off with small overhead, as discussed in~\cite{AhleAP17} in another setting.

\section*{Acknowledgments}
This work was partially supported by Investigator Grant 16582, Basic Algorithms Research Copenhagen (BARC), from the VILLUM
Foundation,  by the University of Padova under the SID 2018 Project, and  by the Italian Ministry of Education, University
and Research (MIUR), under PRIN Project n. 20174LF3T8 AHeAD.

\bibliographystyle{ACM-Reference-Format}
\bibliography{biblio}
\appendix

\section{Sampling with a repeated query}\label{app:nns_repeated} 
If we repeat the same query $\q$ several times (and no other queries), the data structure in Section \ref{sec:sampling} always returns the same point. 
If we let $OUT_i$ denote the output at the $i$-th repetition of query  $\q$, we have that
$\PR{OUT_i = \p | OUT_1=\p_1}$ is 1 if $\p=\p_1$ and 0 otherwise.
We would like to extend the above data structure to get
\begin{align}\label{eq:nnsq1}
&\PR{OUT_i = \p | OUT_{i-1}=\p_{i-1}, \ldots OUT_{1}=\p_{1}}= 
\\&= \PR{OUT_i = \p} = \frac{1}{b_S(\q,r)}\nonumber
\end{align}
for a given query $\q$ and $1<i\leq \BT{n}$ (i.e., there is a linear number of queries), and thus to solve the $r$-NNIS problem in the case when only one query is repeated.
The main idea is to select a near neighbor $\p$ of $\q$ as in the  data structure of Theorem~\ref{thm:nns} and, just before returning $\p$, to apply a small random perturbation to ranks for ``destroying'' any relevant information that can be collected by repeating the query.
The perturbation is obtained by applying a random swap, similar to the one in the Fisher-Yates  shuffle \cite{Knuth97}: we random select a point $\p'\in S$ (in fact, a suitable subset of $S$) and exchange the  ranks of $\p$ and $\p'$. 
In this way, we get a data structure satisfying Equation~\eqref{eq:nnsq1}
with  expected query time $\BO{(n^{\rho} + b_S(\q,cr)-b_S(\q,r) ) \log^2 n}$. 

We remark that the rerandomization technique is only restricted to single element queries: 
over time all elements in the ball $B_S(\q, r)$ get higher and higher ranks. This means that for another query $\q'$ such that $B_S(\q, r)$ and $B_S(\q', r)$ have a non-empty intersection, the elements in $B_S(\q', r) \setminus B_S(\q,r)$ become more and more likely to be returned. The next section will provide a slightly more involved data structure that guarantees independence among queries.

We first explain why two simple adaptations of the data structure for $r$-NNS in section \ref{sec:sampling} don't work with a repeated query. 
The first adaptation consists in returning the point with the $k$-th smallest rank in $S_\q = \bigcup_{i=1}^{L} S_{i,\ell_i(\q)}$ where $k$ is an integer in $[1,S_\q]$ randomly and independently selected at query time.
Although this approach gives Equations \ref{eq:nnsq1}, the query time is $\BO{n^\rho b_S(\q,r)+ b_S(\q,cr)}$ with constant probability, as in the trivial solution described at the beginning.
Indeed, since there are point repetitions among buckets,  we have to scan almost all buckets $S_{i,\ell_i(\q)}$ for every $i$, getting the same asymptotic query time of the trivial LSH solution.
Another  approach is to random select an integer value $r$ in  $[r_m, r_M]$, where $r_m$ and $r_M$ are the smallest and largest ranks in $S_\q$, and then to return the point in $S_\q$ with the closest rank to $r$.
However, this approach doesn't guarantee independence among the outputs: Indeed, before the first query the initial permutation is unknown and thus all points  are equally likely to be returned; subsequent queries however reveal some details on the initial distribution, conditioning  subsequent queries and exposing which points are more likely to be reported.
We now describe our solution.

\paragraph*{Construction}
The construction is as in the one in section \ref{sec:sampling}, with the exception that points in each bucket $S_{i,j}$ are stored in a priority queue, where  ranks give the priority. 
We assume that each queue supports the extraction of the point with \emph{minimum} rank and the insertion, deletion, and update of a point given the (unique) rank.

\paragraph*{Query}
The algorithm starts as in the previous data structure by searching for the point $\x$ in  $B_S(\q,r) \cap \bigcup_{i=1}^{L} S_{i,\ell_i(\q)}$ with lowest rank.
However, just before returning the sampled point, we update the rank of $\x$ as follows. 
Let $r_\x$ be the rank of $\x$; the algorithm selects uniformly at random  a value $r$ in $\{r_\x,\ldots n\}$ and let $\y$ be the point in $S$ with rank $r$; then the algorithm swaps the rank of $\x$ and $\y$ and updates all data structures (i.e., the priority queues of buckets $S_{i,\ell_i(\x)}$ and  $S_{i,\ell_i(\y)}$ with $1\leq i \leq L$).

\begin{theorem}\label{thm:nnsr}
With high probability $1-1/n$, the above data structure solves the $r$-NNIS problem with one single repeated query: for each  query point $\q$ and for each point $\p\in S$, we have:
\begin{enumerate}
\item $\p$ is returned as near neighbor of $\q$ with probability $1/b_S(\q,r)$.
\item $\PR{OUT_i = \p | OUT_{i-1}=\p_{i-1}, \ldots OUT_{1}=\p_{1}} = {1}/{b_S(\q,r)}$ for each $1<i\leq n$.
\end{enumerate}
Furthermore, the data structure requires $\BT{ n^{1+\rho}\log n}$ space and the expected query time is $$\BO{(n^{\rho} + b_S(\q,cr)-b_S(\q,r)) \log^2 n}.$$
\end{theorem}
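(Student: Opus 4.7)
The plan is to establish properties 1 and 2 by maintaining a distributional invariant on the permutation $\pi_i$ after the $i$-th query, and to bound space and time separately. Write $B = B_S(\q,r)$ with $|B| = m$, and for a permutation $\pi$ let $(T(\pi), U(\pi))$ denote the ranks held by $S \setminus B$ together with the induced bijection $S \setminus B \to T(\pi)$. The invariant (R) I would carry through the induction is: conditional on the output history $(OUT_1, \ldots, OUT_i)$ and on $(T(\pi_i), U(\pi_i))$, the induced bijection $B \to [n] \setminus T(\pi_i)$ is uniform over all $m!$ bijections. Once (R) is in place, the minimum-rank holder in $B$ is uniform over $B$ independently of the history, immediately giving $\PR{OUT_{i+1} = \p \mid \text{history}} = 1/m$ for every $\p \in B$ and thus both properties. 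The LSH failure event that some $\p \in B$ never collides with $\q$ under the $L$ hash functions contributes at most $1/n$ across all $n$ queries as in Theorem 1, and the base case of uniform $\pi_0$ makes (R) immediate.

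I would split the inductive step into two parts. First, because the update (find the min-rank holder $\x_{i+1}$ in $B$, then swap its rank with a uniform rank in $\{r_{\x_{i+1}}, \ldots, n\}$) commutes with the action of $\text{Sym}(B \setminus \{\x_{i+1}\}) \times \text{Sym}(S \setminus B)$ on the labels, the inductive hypothesis would make the conditional distribution of $\pi_{i+1}$ equivariant under this group, which yields uniformity of the bijection $B \setminus \{\x_{i+1}\} \to ([n] \setminus T(\pi_{i+1})) \setminus \{\pi_{i+1}(\x_{i+1})\}$ given $(T(\pi_{i+1}), U(\pi_{i+1}))$ and $\pi_{i+1}(\x_{i+1})$. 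The main obstacle will be the second part: showing that $\x_{i+1}$'s post-swap rank is itself uniform over $[n] \setminus T(\pi_{i+1})$ given $(T(\pi_{i+1}), U(\pi_{i+1}))$. This reduces, via direct enumeration of pre-swap states compatible with each post-swap configuration, to the identity
\[
\EXP{\frac{r_2 - r_1}{n - r_1 + 1}} = \frac{1}{m},
\]
for $r_1 < r_2 < \ldots < r_m$ the sorted ranks of $B$ under a uniform $m$-subset of $[n]$; this follows from $\EXP{r_2 - r_1 \mid r_1} = (n - r_1 + 1)/m$ (the minimum of $m - 1$ samples from a uniform subset of $\{1, \ldots, n - r_1\}$ has expectation $(n - r_1 + 1)/m$). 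The delicate step will be carrying the identity through at later queries, when the marginal on $(T, U)$ is no longer uniform after repeated swaps; the symmetries established in the first part are exactly what allow the reduction back to this identity.

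For complexity, the $\BT{n^{1+\rho}\log n}$ space comes from storing the $L = \BT{p_1^{-1}\log n}$ LSH tables as priority queues. The expected query time I would bound as the sum of: $\BO{n^\rho \log^2 n}$ for scanning the priority queues for the lowest-rank $r$-near neighbor (the extra $\log n$ over Theorem 1 coming from the priority-queue extract-min cost); up to $\BO{(b_S(\q, cr) - b_S(\q, r))\log^2 n}$ in the worst case, where, unlike in Theorem 1, the dependence is additive rather than a ratio because after many swaps the $r$-near points may have been pushed to large ranks, forcing the scan to pass through all collisions of distance in $(r, cr]$; and $\BO{L \log n} = \BO{n^\rho \log^2 n}$ for updating the priority queues in the $L$ buckets that contain $\x_{i+1}$ or $\y$ after the rank swap.
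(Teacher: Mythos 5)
Your overall strategy --- maintain a distributional invariant on the permutation and argue that the final rank swap restores it --- is the same as the paper's (whose own justification of properties 1 and 2 is considerably terser), and your space and query-time accounting matches the claimed bounds. The problem is in the probabilistic core: the invariant (R) as you state it is not self-propagating, and the step you yourself flag as ``delicate'' is a genuine gap that the part-one symmetries do not resolve. Concretely, let $R$ denote the set of ranks occupied by $B$ and let $\mu(\cdot)$ be its law conditioned on the history before the query. Enumerating pre-swap states as you propose, for a target rank-set $A$ and bijection $\tau: B \to A$ one gets
\[
\PR{R'=A,\ \sigma'=\tau \mid \text{history},\ \x} \;=\; \frac{1}{(m-1)!}\left[\frac{\mu(A)}{n-\min A+1} \;+\; \sum_{v=1}^{\min A-1}\frac{\mu\bigl((A\setminus\{\tau(\x)\})\cup\{v\}\bigr)}{n-v+1}\right],
\]
where the sum accounts for pre-swap states in which $\x$ sat at rank $v<\min A$ and was swapped with a far point at rank $\tau(\x)$. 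Uniformity of $\sigma'$ given $R'=A$ therefore requires $\sum_{v<\min A}\mu((A\setminus\{a\})\cup\{v\})/(n-v+1)$ to be identical for every $a\in A$: this is a constraint on the \emph{marginal over rank-sets}, not on the labelling. Invariant (R) says nothing about this marginal, and equivariance under $\mathrm{Sym}(B\setminus\{\x\})\times\mathrm{Sym}(S\setminus B)$ cannot supply it either, since that group permutes labels within each class and fixes every rank-set. For an arbitrary $\mu$ consistent with (R) the required equality fails (take $n=4$, $m=2$, $A=\{3,4\}$ with $\mu(\{1,3\})\neq\mu(\{1,4\})$), so your induction does not close, and the same issue invalidates the claim that the identity $\EXP{(r_2-r_1)/(n-r_1+1)}=1/m$ carries over to later queries, since after the first swap $R$ is no longer a uniform $m$-subset.

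The missing ingredient is a second invariant on the marginal: $\mu(R)$ depends only on $\min R$. This holds initially (uniform permutation); it makes the bracketed expression above independent of $\tau$, which re-establishes (R); and summing over $\tau$ shows it is itself preserved, with $\mu'(A)=m\sum_{v\le \min A} g(v)/(n-v+1)$ when $\mu(R)=g(\min R)$. It also rescues your expectation identity at every step, because conditioned on $\min R$ the set $R$ is then again uniform over $m$-subsets with that minimum. With this strengthened invariant your argument goes through (and supplies the rigour the paper's two-line argument omits); without it, the ``delicate step'' is not merely delicate but unproved.
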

\begin{proof}
We claim that after every query $\q$, it is not possible  to distinguish how the  $B_S(\q,r)$ near neighbors of $\q$ are distributed among the ranks $\{r_\x,\ldots n\}$.
Before the rank shuffle, the position of $\x$ is known while the positions of points in $B_S(\q,r)/\{\x\}$ in $\{r_\x+1,\ldots n\}$ still remain unknown.
After swapping $\x$ with a random point with rank $\{r_\x, n\}$, $\x$ can be in each rank with  probability $1/(n-r_\x+1)$ and all distributions of points in $B_S(\q,r)$ in ranks $\{r_\x, n\}$ are thus equally likely. 
As a consequence, properties 1 and 2 in the statement follow. 

Before finding a $r$-near neighbor, the query algorithm finds, with high probability, $\BO{(b_S(\q,cr)-b_S(\q,r)+L)\log n}$ points with distance larger than $r$.  
The final rank shuffle requires $\BO{L \log n}$ time as we need to update the at most $2L$ buckets and priority queues with points $\x$ and $\y$. The theorem follows.\footnote{We observe that if \emph{all} points touched at query time are updated with the rank-swap approach, then we get $\BO{(1+b_S(\q,cr)/b_S(\q,r)) n^\rho }$ expected query time. 
However, this bound is better than the one stated in Theorem~\ref{thm:nnsr} only when $b_S(\q,cr)=\BO{b_S(\q,r)}$.} 
\end{proof}

\section{A linear space near-neighbor data structure}
\label{app:tableau}
We will split up the analysis of the data structure from Section~\ref{sec:tableau} into two parts. First, we describe and analyse a query algorithm that ignores the cost of storing and evaluating the $m$ random vectors. Next, we will describe and analyze the changes necessary to obtain an efficient query method as the one described in Section~\ref{sec:tableau}.

\subsection{Description of the Data Structure}

\paragraph*{Construction} To set up the data structure for a point set $S \subseteq \mathbb{R}^d$ of $n$ data points and two parameters $\beta < \alpha$, 
choose $m \geq 1$ random vectors $\a_1, \ldots, \a_m$ where each $\a = (a_1, \ldots, a_d) \sim \mathcal{N}(0, 1)^d$ is a vector of $d$ independent and identically  distributed standard normal Gaussians. 
For each $i \in\{1,\ldots,m\}$, let $L_i$ contain all data points $\x \in S$ such that $\ip{\a_i}{\x}$ is largest among all vectors $\a$. 

\paragraph*{Query} For a query point $\q \in \mathbb{R}^d$ and for a choice of $\varepsilon \in (0, 1)$ controlling the success probability of the query, define $f(\alpha, \varepsilon) = \sqrt{2(1-\alpha^2)\ln(1/\varepsilon)}$. 
Let $\Delta_\q$ be the largest inner product of $\q$ over all $\a$. 
Let $L'_1, \ldots, L'_K$ denote the lists associated with random vectors $\a$ satisfying $\ip{\a}{\q} \geq \alpha \Delta_q - f(\alpha,\varepsilon)$. 
Check all points in $L'_1, \ldots, L'_K$ and report the first point $\x$ such that $\ip{\q}{\x} \geq \beta$. 
If no such point exists, report $\perp$.

The proof of the theorem below will ignore the cost of evaluating $\a_1, \ldots, \a_m$. 
An efficient algorithm for evaluating these vectors is provided in Section~\ref{sec:tableau:efficient}.

\begin{theorem}
\label{thm:tableau}
Let $-1 < \beta < \alpha < 1$, $\varepsilon \in (0, 1)$, and $n \geq 1$. 
Let $\rho = \frac{(1-\alpha^2)(1-\beta^2)}{(1 - \alpha\beta)^2}$. 
There exists $m = m(n, \alpha, \beta)$ such that the data structure described above solves the $(\alpha, \beta)$-NN problem with probability at least $1 - \varepsilon$ 
using space $O(m + n)$ and expected query time $n^{\rho + o(1)}$. 
\end{theorem}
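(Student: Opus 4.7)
The plan is to set $m = n^{\gamma}$ with $\gamma = (1-\beta^2)/(1-\alpha\beta)^2$ and argue that both contributions to the expected query time — the number of lists inspected and the number of far points encountered in those lists — are $n^{\rho + o(1)}$. The space bound $O(m+n)$ follows immediately since each data point is stored exactly once and there are $m$ lists. The key ingredient is a concentration statement for Gaussian order statistics: letting $T_m = \sqrt{2\ln m}$, the maximum of $m$ i.i.d.\ $\mathcal{N}(0,1)$ random variables lies in $[T_m - g(m),\, T_m + g(m)]$ with probability $1 - o(1)$ for some $g(m) = o(T_m)$. I would apply this both to $\{\ip{\a_i}{\q}\}$, whose maximum is $\Delta_{\q}$, and, for each fixed data point $\p$, to $\{\ip{\a_i}{\p}\}$, whose maximum I denote $X_{i^*} := \ip{\a_{i^*}}{\p}$.

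For correctness, fix a point $\p$ with $\ip{\p}{\q} \geq \alpha$. Conditionally on the values of all $\ip{\a_j}{\p}$, the inner products $\ip{\a_j}{\q}$ are independent Gaussians; in particular $Y_{i^*} := \ip{\a_{i^*}}{\q}$ is conditionally $\mathcal{N}(\ip{\p}{\q}\, X_{i^*},\, 1 - \ip{\p}{\q}^2)$. In the worst case $\ip{\p}{\q} = \alpha$, a one-sided Gaussian tail bound gives $Y_{i^*} \geq \alpha X_{i^*} - f(\alpha,\varepsilon)$ with probability $\geq 1 - \varepsilon$. Combining with $X_{i^*} \geq T_m - g(m)$ and $\Delta_{\q} \leq T_m + g(m)$ yields $Y_{i^*} \geq \alpha \Delta_{\q} - f(\alpha,\varepsilon) - 2\alpha g(m)$, so $\p$ lies in an inspected list with probability $\geq 1 - \varepsilon - o(1)$; the additive slack is absorbed by slightly shrinking $\varepsilon$ or enlarging $f$ by $o(1)$.

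For the running time I would bound the two contributions separately. The expected number of inspected lists is at most $m$ times the probability that a single random direction $\a$ satisfies $\ip{\a}{\q} \geq \alpha \Delta_{\q} - f$; using $\Delta_{\q} \leq T_m + g(m)$ and the standard Gaussian tail, this probability is $m^{-\alpha^2 + o(1)}$, giving $m^{1 - \alpha^2 + o(1)} = n^{\gamma(1 - \alpha^2) + o(1)}$ lists in expectation. For the far-point count, fix $\x \in S$ with $\ip{\x}{\q} \leq \beta$ and let $j^* = \arg\max_j \ip{\a_j}{\x}$. Conditioning on $\ip{\a_{j^*}}{\x} \approx T_m$, the probability that $\ip{\a_{j^*}}{\q} \geq \alpha T_m - f - o(1)$ equals the probability that a $\mathcal{N}(\beta T_m,\, 1 - \beta^2)$ variable exceeds $\alpha T_m - f - o(1)$, which is $m^{-(\alpha - \beta)^2/(1 - \beta^2) + o(1)}$; summing over the at most $n$ far points gives $n^{1 - \gamma(\alpha-\beta)^2/(1-\beta^2) + o(1)}$.

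The proof concludes with the algebraic identity $(1 - \alpha\beta)^2 - (\alpha - \beta)^2 = (1-\alpha)(1+\beta)(1-\beta)(1+\alpha) = (1 - \alpha^2)(1 - \beta^2)$, which implies both $\gamma(1 - \alpha^2) = \rho$ and $1 - \gamma(\alpha - \beta)^2/(1 - \beta^2) = \rho$, so both terms equal $n^{\rho + o(1)}$. The main obstacle I anticipate is making the concentration-of-maxima step uniform: the threshold $\alpha \Delta_{\q} - f$ is itself random, and the conditional analyses for the near point and for each far point are with respect to different ``winning'' indices. Controlling these events so that all concentration failures can be absorbed into a single $o(1)$ in the exponent — rather than accumulating into constant-factor losses per point — is what forces the bound to be $n^{\rho + o(1)}$ instead of a clean $n^{\rho}$.
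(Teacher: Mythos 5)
Your proposal is correct and follows essentially the same route as the paper's own proof (Lemmas~\ref{lem:tableau:close} and~\ref{lem:tableau:far} in the appendix): the same choice $m = n^{(1-\beta^2)/(1-\alpha\beta)^2+o(1)}$, the same reduction to the order statistics of $m$ standard Gaussians concentrated around $\sqrt{2\ln m}$, the same conditional-Gaussian tail computations for the near point and for each far point, and the same algebraic identity $(1-\alpha\beta)^2-(\alpha-\beta)^2=(1-\alpha^2)(1-\beta^2)$ to balance the list count against the far-point count. One cosmetic slip: to upper-bound the number of inspected lists you need the lower-tail event $\Delta_{\q}\geq T_m-g(m)$ (so that the query threshold is not too low), not $\Delta_{\q}\leq T_m+g(m)$; this does not affect the $m^{1-\alpha^2+o(1)}$ conclusion since $f(\alpha,\varepsilon)$ and $g(m)$ perturb the exponent only by $o(1)$.
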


We split the proof up into multiple steps. First, we show that for every choice of $m$, inspecting the lists associated with those random vectors $\a$ such that their inner product with the query point $\q$ is at least the given query threshold guarantees to find a close point with probability at least $1-\varepsilon$. The next step is to show that the number of far points in these lists is $n^{\rho + o(1)}$ in expectation. 

\subsection{Analysis of Close Points}

\begin{lemma}
\label{lem:tableau:close}
Given $m$ and $\alpha$, let $\q$ and $\x$ such that $\ip{\q}{\x} = \alpha$. 
    Then we find $\x$ with probability at least $1- \varepsilon$ in the lists associated with vectors that have inner product at least $\alpha\Delta_\q - f(\alpha, \varepsilon)$ with $\q$. 
\end{lemma}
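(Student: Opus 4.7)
The plan is to exploit Gaussian rotational symmetry to reduce to a two-dimensional calculation that pivots on a single hidden standard normal. First I would decompose $\q = \alpha \x + \sqrt{1-\alpha^2}\,\q^\perp$, where $\q^\perp$ is a unit vector orthogonal to $\x$. For each filter vector $\a_j$ set $X_j := \ip{\a_j}{\x}$ and $W_j := \ip{\a_j}{\q^\perp}$; by rotational invariance of $\mathcal{N}(0, I_d)$ the pairs $(X_j, W_j)$ are i.i.d.\ with independent $\mathcal{N}(0,1)$ coordinates, and $\ip{\a_j}{\q} = \alpha X_j + \sqrt{1-\alpha^2}\,W_j$. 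Writing $j^\ast$ for the index that maximizes $X_j$, and setting $Y := X_{j^\ast}$ and $s := \sqrt{1-\alpha^2}$, the key observation is that $j^\ast$ depends only on $(X_j)_{j \in [m]}$ and hence $W_{j^\ast}$ is a standard normal independent of $(X_j)_{j}$. Thus $\ip{\a_{j^\ast}}{\q} = \alpha Y + s\,W_{j^\ast}$.

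Next I would rewrite the failure event $\ip{\a_{j^\ast}}{\q} < \alpha \Delta_\q - f(\alpha,\varepsilon)$ as $W_{j^\ast} < (\alpha(\Delta_\q - Y) - f(\alpha,\varepsilon))/s$. Letting $M := \max_{j \neq j^\ast}\ip{\a_j}{\q}$ so that $\Delta_\q = \max(\ip{\a_{j^\ast}}{\q}, M)$, a short case analysis (on whether the max is realized at $j^\ast$ or elsewhere) will show that, outside a negligible regime where $M$ is extremely negative, failure is equivalent to $W_{j^\ast} < T$ with $T := (\alpha(M - Y) - f(\alpha,\varepsilon))/s$. Conditioning on $(X_j)_j$ and $(W_j)_{j \neq j^\ast}$ fixes $T$ while leaving $W_{j^\ast}$ an independent $\mathcal{N}(0,1)$, so the conditional failure probability equals $\Phi(T)$, where $\Phi$ denotes the standard normal CDF. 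Since $f(\alpha,\varepsilon) = s\sqrt{2\ln(1/\varepsilon)}$, whenever $M \leq Y$ we have $T \leq -\sqrt{2\ln(1/\varepsilon)}$, and the standard Gaussian tail bound gives $\Phi(T) \leq \varepsilon$.

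The hard part will be controlling the contribution from the event $M > Y$, since $M$ and $Y$ are both essentially maxima of $m$ coupled standard normals and neither stochastically dominates the other. To close this gap I would invoke concentration of the maximum of $m$ standard normals around $\sqrt{2\ln m}$ at Gumbel scale, a standard result on normal order statistics surveyed in~\cite{david2004order}; this controls $M - Y$ tightly enough that the linear penalty $\alpha(M-Y)/s$ appearing in $T$ is absorbed. Integrating $\Phi(T)$ against the resulting distribution of $M - Y$ and combining with the Gaussian tail estimate from the $M \leq Y$ regime should deliver $\EXP{\Phi(T)} \leq \varepsilon$, which is exactly the required failure bound.
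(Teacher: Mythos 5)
Your proposal is correct and shares the paper's core engine---reduce to two dimensions by rotational symmetry, observe that the component of the arg-max filter orthogonal to $\x$ is a standard normal independent of all the projections onto $\x$, and beat the threshold $\alpha\Delta_\q - f(\alpha,\varepsilon)$ by a Gaussian tail of mass $\varepsilon$---but it diverges in how the two maxima are compared. The paper sidesteps the coupling between $\Delta_\q$ and $\Delta_\x$ entirely: it fixes a single deterministic threshold $t = \sqrt{2\log m - \log(4\kappa\pi\log m)}$, pays two union-bound terms $\PR{\Delta_\x < t}$ and $\PR{\Delta_\q < t}$ (each pushed below a small $\delta$), and then evaluates the main term with $\Delta_\q = \Delta_\x = t$ so that the gap vanishes identically and only the offset $-f(\alpha,\varepsilon)/\sqrt{1-\alpha^2}$ survives. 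You instead keep the gap $M - Y$ random and control its positive part via concentration of the maximum of $m$ standard normals. Your route costs more in the final integration step---which you only sketch, and which will deliver $\varepsilon(1+o(1))$ rather than exactly $\varepsilon$, so the constant in $f$ must absorb the slack---but it is arguably more faithful to what actually needs proving: the paper's decomposition only rules out $\Delta_\q$ being \emph{too small}, whereas the dangerous event for failure is $\Delta_\q$ exceeding $\Delta_\x$, which your $M > Y$ case confronts directly. Your final accounting also lands in the same place as the paper's, which likewise concludes with success probability $1 - \varepsilon - 2\delta$ and an instruction to retune parameters; neither argument gives $1-\varepsilon$ on the nose without that adjustment. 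One point to make explicit when writing this up: conditioned on $j^\ast$, the variables $\ip{\a_j}{\q}$ for $j \neq j^\ast$ are no longer i.i.d.\ (their $\x$-components are conditioned to lie below $Y$), so when bounding the upper tail of $M$ you should dominate it by the unconditional maximum $\Delta_\q$ over all $m$ filters, which only helps since the conditioning pushes $M$ down.
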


\begin{proof}
By spherical symmetry \cite{christiani2017framework}, we may assume that $\x = (1, 0, \ldots, 0)$ and $\q = (\alpha, \sqrt{1 - \alpha^2}, 0,$ $  \ldots, 0)$ . The probability of finding $\x$ when querying the data structure for $\q$ can be bounded as follows from below. Let $\Delta_\x$ be the largest inner product of $\x$ with vectors $\a$ and let $\Delta_\q$ be the largest inner product of $\q$ with these vectors. Given these thresholds, finding $\x$ is then equivalent to the statement that for the
vector $\a$ with
$\ip{\a}{\x} = \Delta_\x$ we have $\ip{\a}{\q} \geq \alpha \Delta_{\q} - f(\alpha,\varepsilon)$. We note that $\PR{\max\{\ip{\a}{\q}\} = \Delta} = 1 - \PR{\forall i: \ip{\a_i}{\q} < \Delta}$. 

Thus, we may lower bound the probability of finding $\x$ for arbitrary choices $\Delta_{\x}$ and $\Delta_\q$ as follows:
\begin{align}
    \label{eq:tableau:success}
\PR{\text{find \hspace{-0.1em} $\x$}} \hspace{-0.1em} & \hspace{-0.1em} \geq \hspace{-0.1em} \PR{\hspace{-0.1em} \ip{\a}{\q}\hspace{-0.1em} \geq \hspace{-0.1em} \alpha\Delta_{\q}{-} f(\alpha,\varepsilon) \hspace{-0.1em} \mid \hspace{-0.1em} \ip{\a}{\x} = \Delta_{\x}\hspace{-0.1em}\text{ and } \hspace{-0.1em}\ip{\a'}{\q} \hspace{-0.1em} = \hspace{-0.1em} \Delta_{\q}\hspace{-0.1em}}\notag\\ 
    &\quad - \PR{\forall i: \ip{\a_i}{\x} < \Delta_\x}  - \PR{\forall i: \ip{\a_i}{\q} < \Delta_\q}.
\end{align}
Here, we used that $\PR{A \cap B \cap C} = 1 - \PR{\overline{A} \cup \overline{B} \cup \overline{C}} \geq \PR{A} - \PR{\overline{B}} - \PR{\overline{C}}$. 
We will now obtain bounds for the three terms on the right-hand side of \eqref{eq:tableau:success} separately, but we first recall the following lemma from \cite{SZAREK1999193}:
\begin{lemma}[\cite{SZAREK1999193}]\label{lem:normal:bound}
Let $Z$ be a standard normal random variable. Then, for every $t \geq 0$, we have that 
\begin{align*}
\frac{1}{\sqrt{2\pi}}\frac{1}{t + 1}e^{-t^2/2} \leq \Pr(Z \geq t) \leq \frac{1}{\sqrt{\pi}}\frac{1}{t + 1}e^{-t^2/2}.
\end{align*}
\end{lemma}

\paragraph{Bounding the first term.} Since $\q = (\alpha, \sqrt{1- \alpha^2}, 0, \ldots, 0)$ and $\x=(1, 0, \ldots, 0)$, the condition $\ip{\a}{\x} = \Delta_\x$ means that the first component of $\a$ is $\Delta_\x$. 
Thus, we have to bound the probability that a standard normal random variable $Z$ satisfies the inequality $\alpha \Delta_\x + \sqrt{1 - \alpha^2} Z \geq \alpha\Delta_\q - f(\alpha, \varepsilon)$. 
Reordering terms, we get 
\begin{align*}
Z \geq \frac{\alpha\Delta_{\q} - f(\alpha,\varepsilon) - \alpha \Delta_\x}{\sqrt{1 - \alpha^2}}.
\end{align*}
Choose $\Delta_\q = \Delta_\x$. 
In this case, we bound the probability that $Z$ is larger than a negative value. 
By symmetry of the standard normal distribution and using Lemma~\ref{lem:normal:bound}, we may compute
\begin{align}
    \label{eq:tableau:success:2}
    \PR{Z \geq -\frac{f(\alpha, \varepsilon)}{\sqrt{1 - \alpha^2}}}
    &=1 - \PR{Z < -\frac{f(\alpha, \varepsilon)}{\sqrt{1 - \alpha^2}}}
    \notag
    \\&=1 - \PR{Z \geq \frac{f(\alpha, \varepsilon)}{\sqrt{1 - \alpha^2}}}\notag\\
    &\geq 
    1 - \frac{\text{Exp}\left(-\frac{(f(\alpha, \varepsilon))^2}{2(1 - \alpha^2)}\right)}{\sqrt{2\pi}
            \left(\frac{f(\alpha, \varepsilon)}{\sqrt{1 - \alpha^2}} + 1\right) }
    \geq 1 - \varepsilon.
\end{align}

\paragraph{Bounding the second term and third term.} We first observe that 
\begin{align*}
    \PR{\forall i: \ip{\a_i}{\x} < \Delta_\x} &= \PR{\ip{\a_1}{\x} < \Delta_\x}^m \\
    &= \left(1 - \PR{\ip{\a_1}{\x} \geq \Delta_\x}\right)^m\\
        &\leq \left(1 - \frac{\text{Exp}[-\Delta_\x^2/2]}{\sqrt{2\pi}(\Delta_x + 1)}\right)^m.
\end{align*}
Setting $\Delta_\x = \sqrt{2 \log m - \log(4\kappa\pi \log(m))}$ upper bounds this term by $\text{Exp}[-\sqrt{\kappa}]$.
Thus, by setting $\kappa \geq \log^2(1/\delta)$ the second term is upper bounded by $\delta \in (0, 1)$. 
The same thought can be applied to the third summand of \eqref{eq:tableau:success}, which is only smaller because of the negative offset $f(\alpha, \varepsilon)$.



\paragraph{Putting everything together.} 
Putting the bounds obtained for all three summands together shows that we can find $\x$ with probability at least $1 - \varepsilon'$ by choosing $\varepsilon$ and $\delta$ such that $\varepsilon' \geq \varepsilon + 2 \delta$. 
\end{proof}

\subsection{Analysis of Far Points}

\begin{lemma}
    \label{lem:tableau:far}
    Let $-1 < \beta < \alpha < 1$. 
    There exists $m = m(n, \alpha, \beta)$ such that the expected number of points $\x$ with $\ip{\x}{\q} \leq \beta$ in $L'_1,\ldots,L'_K$ where $K =
    |\{i \mid \ip{\a_i}{\q} \geq \alpha\Delta_\q - f(\alpha, \varepsilon)\}|$ is $n^{\rho + o(1)}$.
\end{lemma}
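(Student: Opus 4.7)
The plan is to bound, for a fixed point $\x$ with $\ip{\x}{\q} \leq \beta$, the probability $p_{\mathrm{far}}$ that $\x$ is stored in some inspected list $L'_i$, and then to multiply by $n$. Since the relevant joint Gaussian tail is monotone in the correlation between $\x$ and $\q$, the worst case is $\ip{\x}{\q} = \beta$, and by spherical symmetry I may assume $\q = (1, 0, 0, \ldots, 0)$ and $\x = (\beta, \sqrt{1 - \beta^2}, 0, \ldots, 0)$. The behavior of a single $\a$ with respect to $\q$ and $\x$ is then captured by its first two coordinates $(a_1, a_2)$, which are independent standard Gaussians.

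Next I would handle the random thresholds $\Delta_\x$ and $\Delta_\q$ by a standard concentration argument for the maximum of $m$ i.i.d.\ standard Gaussians (see \cite{david2004order}): both maxima lie in an interval of width $o(1)$ around $\Delta := \sqrt{2 \ln m}$ with probability $1 - o(1)$ in $m$, and the atypical event contributes only $n^{o(1)}$ to the count once $m$ is polynomial in $n$. Conditioning on the typical values, the event that $\x$ is inspected reduces to the existence of some $\a_i$ with $\ip{\a_i}{\x} \geq \Delta$ and $\ip{\a_i}{\q} \geq \alpha \Delta - f(\alpha, \varepsilon)$. A union bound over $i \in [m]$ then yields
\begin{equation*}
p_{\mathrm{far}} \,\leq\, m \cdot \PR{a_1 \geq \alpha \Delta - f(\alpha,\varepsilon),\ \beta a_1 + \sqrt{1-\beta^2}\, a_2 \geq \Delta}.
\end{equation*}

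The joint probability on the right is a Gaussian tail over a polyhedron in $\mathbb{R}^2$; by a standard Cramér-type estimate it behaves like $\exp(-\tau/2)$ times a polynomial factor in $\Delta$, where $\tau$ is the squared Euclidean distance from the origin to the polyhedron. A short Lagrangian calculation, in which both constraints turn out to be active (because $\beta < \alpha$), gives the minimizer $a_1 = \alpha\Delta$, $a_2 = \Delta(1 - \alpha\beta)/\sqrt{1-\beta^2}$ and hence
\begin{equation*}
\tau \,=\, \Delta^2 \cdot \frac{\alpha^2(1-\beta^2) + (1-\alpha\beta)^2}{1-\beta^2} \,=\, \Delta^2 \cdot \frac{1 + \alpha^2 - 2\alpha\beta}{1-\beta^2}.
\end{equation*}
Substituting $\Delta^2 = 2 \ln m$ and absorbing polynomial factors and the lower-order slack $f(\alpha, \varepsilon)$ into a $n^{o(1)}$ term, I get $p_{\mathrm{far}} \leq m^{1 - (1+\alpha^2-2\alpha\beta)/(1-\beta^2) + o(1)} = m^{-(\alpha-\beta)^2/(1-\beta^2) + o(1)}$.

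Finally, I would set $m = n^{(1-\beta^2)/(1-\alpha\beta)^2}$. Using the algebraic identity $(1-\alpha\beta)^2 - (1-\alpha^2)(1-\beta^2) = (\alpha-\beta)^2$, a short computation gives $n \cdot p_{\mathrm{far}} = n^{\rho + o(1)}$, which is the claimed bound. With the same choice of $m$, the expected number of inspected lists is $K \leq m^{1-\alpha^2 + o(1)} = n^{\rho + o(1)}$, consistent with the running time statement of Theorem~\ref{thm:tableau}. The main obstacle will be the two-sided concentration of the maxima $\Delta_\x$ and $\Delta_\q$ together with the polynomial Mills-ratio factors in the bivariate tail: one has to show that restricting to $\Delta_\x, \Delta_\q \approx \sqrt{2 \ln m}$ is truly without loss of generality and that the slack $f(\alpha,\varepsilon)$ introduces only an $n^{o(1)}$ correction, so that the exponent $\rho$ is not shifted.
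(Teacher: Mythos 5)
Your proposal is correct and follows essentially the same route as the paper's proof: reduce to two Gaussian coordinates by spherical symmetry, condition on the maxima $\Delta_\x,\Delta_\q$ concentrating near $\sqrt{2\ln m}$, derive the per-point probability $m^{-(\alpha-\beta)^2/(1-\beta^2)+o(1)}$, and balance against the $m^{1-\alpha^2+o(1)}$ inspected lists to fix $m=n^{(1-\beta^2)/(1-\alpha\beta)^2+o(1)}$. The only cosmetic difference is that you get the per-point bound via a union bound over $i$ and a two-dimensional distance-to-polyhedron estimate, whereas the paper conditions on $\a$ being the maximizer for $\x$ and evaluates a one-dimensional conditional Gaussian tail; the two computations agree since $m\cdot\Pr[\ip{\a}{\x}\geq\Delta]=m^{o(1)}$ for the chosen threshold.
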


\begin{proof}
    We will first focus on a single far-away point $\x$ with inner product at most $\beta$. 
    Again, let $\Delta_\q$ be the largest inner product of $\q$. Let $\x$ be stored in $L_i$. 
    Then we find $\x$ if and only if $\ip{\a_i}{\q} \geq \alpha\Delta_\q - f(\alpha, \varepsilon).$ 
    By spherical symmetry, we may assume that $\x = (1, 0, \ldots, 0)$ and $\q = (\beta, \sqrt{1 - \beta^2}, 0, \ldots, 0)$. 

    We first derive values $t_\q$ and $t_\x$ such that, with high probability, $\Delta_\q \geq t_\q$ and  $\Delta_\x \leq t_\x$. 
    From the proof of Lemma~\ref{lem:tableau:close}, we know that 
    \begin{align*}
        \PR{\max\{\ip{\a}{\q}\} \geq t}
        &\leq 1 - \left(1 - \frac{\text{Exp}\left(-t^2/2\right)}{\sqrt{2\pi}(t + 1)}\right)^m.
    \end{align*}

    Setting $t_\q = \sqrt{2\log(m / \log(n)) - \log(4 \pi \log(m
    / \log n))}$ shows that with high probability we have $\Delta_\q \geq t_\q$. Similarily, the choice $t_\x = \sqrt{2\log(m  \log(n)) - \log(4 \pi \log(m
     \log n))}$ is with high probability at least $\Delta_\x$. In the following, we condition on the event that $\Delta_\q \geq t_\q$ and $\Delta_\x \leq t_\x$.

     We may bound the probability of finding $\x$ as follows:
     \begin{align*}
     \PR{\ip{\a}{\q} \geq \alpha\Delta_\q - f(\alpha, \varepsilon) \mid \ip{\a}{\x} = \Delta_\x} \leq \\
     \leq \PR{\ip{\a}{\q} \geq \alpha\Delta_\q - f(\alpha, \varepsilon) \mid \ip{\a}{\x} = t_\x}\\
     \leq \PR{\ip{\a}{\q} \geq \alpha t_\q - f(\alpha, \varepsilon) \mid \ip{\a}{\x} = t_\x}.
     \end{align*}

    Given that $\ip{\a}{\x}$ is $t_\x$, the condition $\ip{\a}{\q} \geq
    \alpha t_\q - f(\alpha, \varepsilon)$ is equivalent to the statement that for a standard normal
    variable $Z$ we have $Z \geq \frac{(\alpha t_\q - f(\alpha, \varepsilon) - \beta t_\x)}{\sqrt{1 -
    \beta^2}}$. 
    Using Lemma~\ref{lem:normal:bound}, we have
    \begin{align}
       & \PR{\ip{\a}{\q} \geq \alpha t_\q 
        \hspace{-0.1em}- \hspace{-0.1em} f(\alpha, \varepsilon) \hspace{-0.2em} \mid \hspace{-0.2em} \ip{\a}{\x} = t_\x}
       \hspace{-0.2em} \leq \hspace{-0.2em} \frac{\text{Exp}\left(-\frac{(\alpha t_\q- f(\alpha, \varepsilon)- \beta t_\x)^2}{2(1 - \beta^2)}\right)}{\sqrt{\pi} \left(\frac{(\alpha t_\q- f(\alpha, \varepsilon)- \beta t_\x)}{\sqrt{1 - \beta^2}} + 1\right)}\notag\\
        &\hspace{2em}\leq  \text{Exp}\left(-\frac{(\alpha t_\q- f(\alpha, \varepsilon)- \beta t_\x)^2}{2(1 - \beta^2)}\right)\notag\\
        &\hspace{2em}\stackrel{(1)}{=} \text{Exp}\left(-\frac{(\alpha - \beta)^2 t_\x^2}{2(1 - \beta^2)} \left(1 + O(1 / \log \log n)\right)\right)\notag\\
        &\hspace{2em}= \left(\frac{1}{m}\right)^{\frac{(\alpha - \beta)^2}{1 - \beta^2} + o(1)},
        \label{eq:far:prob}
    \end{align}
    where step (1) follows from the observation that $t_\q/t_\x = 1 + O(1/\log \log n)$ and $f(\alpha, \varepsilon)/t_\x = O(1/\log\log n)$ if $m = \Omega(\log n)$.

    Next, we want to balance this probability with the expected cost for checking all lists where the inner product with the associated vector $\a$ is at least $\alpha \Delta_\q - f(\alpha, \varepsilon)$. 
    By linearity of expectation, the expected number of  lists to be checked is
   not more than $$m \cdot \text{Exp}\left(-(\alpha t_\q)^2\left(1/2 - f(\alpha,\varepsilon)/(\alpha t_\q) + f(\alpha, \varepsilon)^2/(2(\alpha t_\q)^2)\right)\right),$$
   which is $m^{1-\alpha^2 + o(1)}$ using the value of $t_\q$ set above.
    This motivates to set~\eqref{eq:far:prob} equal to $m^{1-\alpha^2} / n$, taking into account that there are at most $n$ far-away points. 
    Solving for $m$, we get 
    $
        m = n^{\frac{1 - \beta^2}{(1 - \alpha\beta)^2} + o(1)}
    $ and this yields $m^{1-\alpha^2 +o(1)} = n^{\rho + o(1)}$.
\end{proof}

\subsection{Efficient Evaluation}
\label{sec:tableau:efficient}
The previous subsections assumed that  
 $m$ filters can be evaluated and stored for free.
 However, this requires space and time $n^{(1-\beta^2)/(1-\alpha\beta)^2}$, which is much higher than the work we expect from checking the points in all filters above the threshold. We solve this problem by using the tensoring approach, which can be seen as a simplified version of the general approach proposed in~\cite{christiani2017framework}. 

 \paragraph*{Construction} Let $t = \lceil 1/(1 - \alpha^2)\rceil$ and assume that $m^{1/t}$ is an integer.
Consider $t$ independent data structures DS$_1$, $\ldots$, DS$_t$, each using $ m^{1/t}$ random vectors $\a_{i, j}$, for $i \in\{1,\ldots,t\}, j \in [m^{1/t}]$. 
Each DS$_i$ is instantiated as described above. 
During preprocessing, consider each $\x \in S$. 
If $\a_{1,i_1},\ldots,\a_{t,i_t}$ are the random vectors that achieve the largest inner product with $\x$ in DS$_1$, $\ldots$, DS$_t$, map the index of $\x$ in $S$ to the bucket $(i_1,\ldots,i_t) \in [m^{1/t}]^t$. 
Use a hash table to keep track of all non-empty buckets. 
Since each data point in $S$ is stored exactly once, the space usage is $O(n + tm^{1/t})$.

\paragraph*{Query} Given the query point $\q$, evaluate all $t m^{1/t}$ filters. 
For $i \in \{1, \ldots, t\}$, let $\mathcal{I}_i = \{j \mid \ip{\a_{i,j}}{\q} \geq \alpha \Delta_{\q, i} - f(\alpha, \varepsilon)\}$ be the set of all indices of filters that are above the individual query threshold in DS$_i$. 
Check all buckets $(i_1, \ldots, i_t) \in \mathcal{I}_1 \times \dots \times \mathcal{I}_t$. 
If there is a bucket containing a close point, return it, otherwise return $\perp$.

\begin{theorem}
Let $S \subseteq X$ with $|S| = n$ and $-1 < \beta < \alpha < 1$. The tensoring data structure solves the $(\alpha, \beta)$-NN problem in linear space and expected time $n^{\rho + o(1)}$.
\end{theorem}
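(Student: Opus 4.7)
The plan is to lift the analysis of Theorem~\ref{thm:tableau} to the tensored structure by exploiting that the $t$ component data structures $\text{DS}_1,\ldots,\text{DS}_t$ are mutually independent, so that probabilities for both close and far points factor as products across the $t$ components. Concretely, a data point $\x$ is placed in bucket $(i_1,\ldots,i_t)$ if and only if $i_k$ is the maximizer of $\ip{\a_{k,\cdot}}{\x}$ in $\text{DS}_k$ for each $k$, and the query algorithm inspects exactly the Cartesian product $\mathcal{I}_1\times\cdots\times\mathcal{I}_t$ of per-component indices above the individual query thresholds.

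For correctness, fix a close point $\x$ with $\ip{\x}{\q}\geq\alpha$. I would apply Lemma~\ref{lem:tableau:close} to each $\text{DS}_k$ with $m$ replaced by $m^{1/t}$ and a rescaled failure budget, which guarantees that the unique index $i_k$ of $\x$ in $\text{DS}_k$ lies in $\mathcal{I}_k$ with probability at least $1-\varepsilon/t$. A union bound over $k\in[t]$ places $\x$ in an inspected bucket with probability at least $1-\varepsilon$; since $t$ depends only on $\alpha$, this only absorbs constants into the hidden factors of Lemma~\ref{lem:tableau:close}.

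For the running time I would bound four contributions separately and set parameters so each is $n^{\rho+o(1)}$. Evaluating the random vectors costs $O(tm^{1/t}d)$. The expected number of inspected buckets equals $\mathbb{E}[\prod_k |\mathcal{I}_k|]$; the per-component estimate $(m^{1/t})^{1-\alpha^2+o(1)}$ from the proof of Lemma~\ref{lem:tableau:far}, together with independence across components, gives expectation $m^{1-\alpha^2+o(1)}$. The expected number of far points inspected is obtained by summing, over $\x$ with $\ip{\x}{\q}\leq\beta$, the probability that $\x$'s bucket lies in $\mathcal{I}_1\times\cdots\times\mathcal{I}_t$; by independence this probability is the $t$-fold product of the per-component probability, and an identical calculation to that in Lemma~\ref{lem:tableau:far} (with $m$ replaced by $m^{1/t}$ inside each factor) recovers the same aggregate bound $(1/m)^{(\alpha-\beta)^2/(1-\beta^2)+o(1)}$ per far point. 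Choosing $m=n^{(1-\beta^2)/(1-\alpha\beta)^2+o(1)}$ as in Lemma~\ref{lem:tableau:far} and $t=\lceil 1/(1-\alpha^2)\rceil$ makes $m^{1/t}=n^{\rho+o(1)}$, which simultaneously matches the filter-evaluation cost, the expected number of inspected buckets, and the expected number of far points encountered. Since $\rho\leq 1$, the total space is $O(n+tm^{1/t})=O(n)$, giving the linear-space claim.

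The main obstacle I anticipate is verifying that the asymptotic bookkeeping in the proof of Lemma~\ref{lem:tableau:far}, and in particular the ratios $t_\q/t_\x = 1 + O(1/\log\log n)$ and $f(\alpha,\varepsilon)/t_\x = O(1/\log\log n)$ that absorb the threshold shift and the tail approximation into the $o(1)$ exponent, continues to hold when the per-component parameter $m^{1/t}$ is used in place of $m$. Because $t$ is a constant depending only on $\alpha$, substituting $m\leftarrow m^{1/t}$ rescales each relevant logarithm by a constant factor and preserves the $O(1/\log\log n)$ slack, so the per-component exponents in both the close-point and far-point analyses retain their $o(1)$ error terms and the product across the $t$ components sums these errors into an overall $o(1)$ in the exponent of $n$.
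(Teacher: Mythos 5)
Your proposal is correct and follows essentially the same route as the paper: analyse each of the $t$ constant-many component structures with $m^{1/t}$ vectors, use independence across components to multiply the per-component bounds for inspected buckets and far-point collisions, and choose $m$ as in the far-point lemma so that $m^{1/t}=n^{\rho+o(1)}$. The only cosmetic difference is that you boost the close-point success probability by a union bound with budget $\varepsilon/t$ per component, whereas the paper writes it as the product $p^t$ of per-component success probabilities and notes it remains a constant (boostable by tuning $\varepsilon$ or by repetition) — these are equivalent since $t$ is a constant depending only on $\alpha$.
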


Before proving the theorem, we remark that efficient evaluation comes at the price of lowering the success probability from a constant $p$ to $p^{1/(1-\alpha^2)}$. 
Thus, for $\delta \in (0,1)$ repeating the construction $\ln(1/\delta)p^{1-\alpha^2}$ times yields a success probability of at least $1-\delta$.

\begin{proof}
Observe that with the choice of $m$ as in the proof of Lemma~\ref{lem:tableau:far}, we can bound $m^{1/t} = n^{(1-\alpha^2)(1-\beta^2)/(1-\alpha\beta)^2 + o(1)} = n^{\rho + o(1)}$. 
This means that preprocessing takes time $n^{1+\rho + o(1)}$. 
Moreover, the additional space needed for storing the $t m^{1/t}$ random vectors is $n^{\rho + o(1)}$ as well. 
For a given query point $\q$, we expect that each $\mathcal{I}_i$ is of size $m^{(1-\alpha^2)/t + o(1)}$. 
Thus, we expect to check not more than $m^{1-\alpha^2 + o(1)}=n^{\rho+o(1)}$ buckets in the hash table, which shows the stated claim about the expected running time.

Let $\x$ be a point with $\ip{\q}{\x} \geq \alpha$. 
The probability of finding $\x$ is the probability that the vector associated with $\x$ has inner product at least $\alpha\Delta_{\q,i} - f(\alpha, \varepsilon)$ in DS$_i$, for all $i \in \{1, \ldots, t\}$. 
This probability is $p^t$, where $p$ is the probability of finding $\x$ in a single data structure $\text{DS}_i$. 
By Theorem~\ref{thm:tableau} and since $\alpha$ is a constant, this probability is constant and can be bounded from below by $1 - \delta$ via a proper choice of $\varepsilon$ as discussed in the proof of Lemma~\ref{lem:tableau:close}.

Let $\y$ be a point with $\ip{\q}{\y} < \beta$. 
Using the same approach in the proof of Lemma~\ref{lem:tableau:far}, we observe that the probability of finding $\y$ in an individual DS$_i$ is $(1/m)^{1/t \cdot (\alpha - \beta)^2/(1-\beta^2) + o(1)}$.
Thus the probability of finding $\y$ in a bucket inspected for $\q$ is at most $(1/m)^{(\alpha - \beta)^2/(1-\beta^2) + o(1)}$. Setting parameters as before shows that we expect at most $n^{\rho +o(1)}$ far points in buckets inspected for query $\q$, which completes the proof.
\end{proof}

\end{document}